\def\withcolors{0} \def\withnotes{0} %%%PACKAGES%%%
\newlist{steps}{enumerate}{1}
\setlist[steps, 1]{label = \quad \quad \quad Step \arabic*:}
\theoremstyle{plain} \newtheorem{thm}{Theorem}[section]
\newtheorem{lem}[thm]{Lemma}
\theoremstyle{definition} \newtheorem{defn}[thm]{Definition}
\theoremstyle{remark} \newtheorem{rem}{Remark}
\definecolor{lightgray}{gray}{0.9}
\newcommand{\ep}{\mathcal{E}}
\newcommand{\eqdef}{:=}
\newcommand{\eq}[1]{\begin{align*}#1\end{align*}}
\newcommand{\ceil}[1]{\left\lceil#1\right\rceil}%
\newcommand{\norm}[1]{\|#1\|}%
 \newcommand{\R}{\mathbb{R}}
\newcommand{\N}{\mathbb{N}}
\newcommand{\E}[1]{\mathbb{E}\left[#1\right]}
\newcommand{\V}{\mathcal{V}}
\newcommand{\X}{\mathcal{X}}
\newcommand{\oO}{\mathcal{O}}
\newcommand{\Q}{\mathcal{Q}}
\newcommand{\uRoman}[1]{\uppercase\expandafter{\romannumeral#1}}
  \newcommand{\newer}[1]{{\color{brown} {#1}}} % even newer
  \newcommand{\newest}[1]{{\color{blue} {#1}}} % the absolute latest stuff
  \newcommand{\mcolor}[1]{{\color{ForestGreen}#1}} % Prathamesh
  \newcommand{\tcolor}[1]{{\color{Orange}#1}} % Himanshu
  \newcommand{\newer}[1]{{{#1}}}
  \newcommand{\newest}[1]{{{#1}}}
  \newcommand{\mcolor}[1]{{#1}}
  \newcommand{\tcolor}[1]{{#1}}
  \newcommand{\mnote}[1]{\par\mcolor{\textbf{M: }\sf #1}} % Prathamesh
  \newcommand{\tnote}[1]{\par\tcolor{\textbf{T: }\sf #1}} % Himanshu
  \newcommand{\mnote}[1]{}
  \newcommand{\tnote}[1]{}
\newcommand{\ignore}[1]{\leavevmode\unskip} % eat unnecessary spaces before
\newcommand{\Qenc}{Q^{{\tt e}}}
\newcommand{\Qdec}{Q^{{\tt d}}}
\newcommand{\indic}[1]{\mathbbm{1}_{#1}}
\begin{document}
\title{Limits on Gradient Compression for Stochastic Optimization}
\author{Prathamesh Mayekar$^\dag$ \and Himanshu Tyagi$^\dag$ }
\maketitle

{\renewcommand{\thefootnote}{}\footnotetext{
\noindent$^\dag$Department of Electrical Communication Engineering,
Indian Institute of Science, India.  Email: \{prathamesh, htyagi\}@iisc.ac.in. }}
%%%%%%%%%%%%%%%%%%%%%%%%%%%%%%%%%
\begin{abstract}
  We consider stochastic optimization over $\ell_p$ spaces using
  access to a first-order oracle.  We ask: {What is the minimum
    precision required for oracle outputs to retain the unrestricted
    convergence rates?}  We characterize this precision for every
  $p\geq 1$ by deriving information theoretic lower bounds and by
  providing quantizers that (almost) achieve these lower bounds.  Our
  quantizers are new and easy to implement.  In particular, our
  results are exact for $p=2$ and $p=\infty$, showing the minimum
  precision needed in these settings are $\Theta(d)$ and $\Theta(\log
  d)$, respectively. The latter result is surprising since recovering
  the gradient vector will require $\Omega(d)$ bits.
\end{abstract}

\section{Introduction}

We consider the classic problem of minimizing an unknown convex
functions which \newer{is} Lipschitz continuous in the $\ell_p$ norm.
Our setting is that of first-order stochastic optimization, where we
are given oracle access to noisy, unbiased estimates of the
subgradients which have their $\ell_q$ norm bounded, where $q$ is the
H\"{o}lder conjugate of $p$. Motivated by recent works on gradient
descent using quantized gradient updates
($cf.$~\cite{de2015taming,gupta2015deep, suresh2017distributed,
  wen2017terngrad, alistarh2017qsgd, stich2018sparsified,
  wang2018atomo, agarwal2018cpsgd, acharya2019distributed,
  karimireddy2019error, basu2019qsparse, ramezani2019nuqsgd,
  gandikota2019vqsgd, mayekarratq}), we seek to determine the minimum
number of bits $r$ to which the gradients can be expressed while
retaining the standard, unrestricted gradient rates.

We show that for $p\in [1,2)$ and $p\geq 2$, respectively, roughly $d$
  and $d^{2/p}\log (d^{1-2/p}+1)$ bits are necessary and sufficient
  for retaining the standard convergence rates. These bounds are tight
  upto an $O(\log d)$ factor, in general, but are exact for $p=2$ and
  $p=\infty$. Prior work has only considered the problem for the
  Euclidean case, and not for general $\ell_p$ geometry. Further, even
  for the Euclidean case, the best known bounds are from our recent
  work~\cite{mayekarratq} where the bounds are shown to be tight only
  upto a mild $O(\log \log \log \ln^\ast d)$ factor. The results in
  this paper get rid of this nagging factor and establish tight
  bounds.

We use different quantizers for $p\geq 2$ and $p\in [1,2)$.  In the
  $p\geq 2$ range, we use a quantizer we call $SimQ^+$.  $SimQ^+$, in
  turn, uses multiple repetitions of another quantizer we call $SimQ$
  which expresses a vector as a convex combination of corner points of
  an $\ell_1$ ball. It is $SimQ$ that yields an $O(\log d)$ bit
  quantizer for optimization over $\ell_\infty$.  Also, $SimQ^+$
  yields the exact upper bound in the $\ell_2$ case.  In the $[1,2)$
    range, we divide the vector into two parts with small and large
    coordinates. We use a uniform quantizer for the first part and
    RATQ of~\cite{mayekarratq} for the second part.

The main observation in our analysis for upper bound is that the role
of quantizer in optimization is not to express the gradient with small
error. It suffices to have an unbiased estimate with appropriately
bounded norms. Our lower bounds are based on those
in~\cite{agarwal2012information,mayekarratq}. But interestingly we
show that bounds that are useless in the classic setting become
useful under precision constraints.

We remark that while our quantizers are related to the ones
  used in prior works, our main contribution is to show that our
  specific design choices yield optimal precision. For instance, the
  quantizers in~\cite{gandikota2019vqsgd} expresses the input as a
  convex combination of set of points, similar to $SimQ$.  In fact,
  one of the quantizers in \cite{gandikota2019vqsgd} uses similar set
  of points as that of SimQ with a different
  scaling. However, the quantizers in \cite{gandikota2019vqsgd} are
  designed keeping in mind other objectives and they fall short of
  attaining the optimal precision guarantees of $SimQ$ and $SimQ^+$.
  
\newest{Also, stochastic optimization over $\ell_p$ spaces using  a biased first-order oracle which is constructed by using only statistical query access to the underlying data
  was considered in~\cite{feldman2017statistical}.} Here, the authors characterize the number of statistical queries needed to optimize a function upto a given accuracy.
This differs from our objective where we assume access to a noisy, unbiased first-order oracle and seek to compress the oracle output to the minimum number of bits, without losing the uncompressed convergence guarantees.

%\newest{Also, $\ell_p$ bounded oracles have been studied in the statistical
%queries literature ($cf.$~ \cite{feldman2017statistical}). However, converting the statistical query algorithms to precision constraint algorithms ($cf.$ \cite{kasiviswanathan2011can}) do not yield the results we report here. }

\mnote{At HT: You had added the references 17, 18. Just glancing through these papers, I am not sure if they have anything to do with $\ell_p$ oracles; [16], added now, does. Perhaps we should remove 17,18 ?}

\textbf{Notation}: Throughout the paper $q$ will denote the H\"{o}lder
conjugate of $p$ (that is, $q = p/(p-1)$).  $a \vee
b$ and $a \wedge b$ denote the $\max\{a,b\}$ and $\min\{a, b\}$, respectively. We
denote by $\log (\cdot)$ 
logarithm to the base $2$ and by $\ln (\cdot)$ logarithm to the base
$e$. $\ln^*(a)$ denotes  the number of times $\ln$ must be iteratively
 applied to $a$ before the result is less than or equal to $1$. $\{e_1, ..., e_d\}$ denotes 
  the standard basis.
We keep  notation consistent with our earlier work~\cite{mayekarratq}.

\section{Preliminaries}
\subsection{The Setup}
We extend the formulation of~\cite{mayekarratq} for Euclidean space to
general $\ell_p$ spaces. Formally, we consider the
problem of minimizing an unknown convex function $\displaystyle{f:\X
  \rightarrow \R}$ using {oracle access} to noisy subgradients of the
function ($cf.$~\cite{nemirovsky1983problem,bubeck2015convex}).  We
assume that the function $f$ is convex over the compact, convex domain
$\X$ such that $\sup_{x,y \in \X}\norm{x-y}_p \leq D$; we denote the
set of all such sets $\X$ by $\mathbb{X}$.  For a query point $x\in
\X$, the oracle outputs random estimates of the subgradient
$\hat{g}(x)$ which for all $x \in \X$ satisfy\footnote{Assumptions
  \eqref{e:asmp_unbiasedness} and \eqref{e:asmp_as_bound} imply that
  any function $f$ in $\mathcal{O}_{0,p}$ is Lipschitz continuous in
  the $\ell_p$ norm over the domain $\X$.}
\begin{align}
\label{e:asmp_unbiasedness}
\E{\hat{g}(x)|x} &\in \partial f(x), \\
\label{e:asmp_as_bound}
P(\norm{\hat{g}(x)}_q^2 &\leq B^2|x)=1,
\end{align}
where $q$ is the H\"older conjugate of $p$ and $\partial f(x)$ denotes
the set of subgradients of $f$ at $x$.
\begin{defn}[Almost surely bounded oracles]
A first order oracle which upon a query $x$ outputs only the
subgradient estimate $\hat{g}(x)$ satisfying the assumptions
\eqref{e:asmp_unbiasedness} and \eqref{e:asmp_as_bound} is termed an
almost surely bounded oracle. We denote the class of convex functions
and oracles satisfying assumptions \eqref{e:asmp_unbiasedness} and
\eqref{e:asmp_as_bound} by $\oO_{0,p}$.
\end{defn}
\begin{rem}[Mean square bounded oracles]\label{r:ms-oracles}
We remark that in the classic literature a more general oracle model
has been considered ($cf.$~ \cite{nemirovsky1983problem},
\cite{nemirovski1995information}). 
Specifically, \eqref{e:asmp_as_bound} is replaced by
  $\E{\norm{\hat{g}(x)}_q^2|x} \leq B^2.$ In this paper we restrict ourselves to
the almost surely bounded oracles. Nonetheless, using the general recipe
in \cite {mayekarratq} for converting quantizers for the almost surely bounded oracles
to mean square bounded ones, we can design quantizers for the latter model as well.
\end{rem}

In our setting, the outputs of the oracle are passed through a
quantizer. An {\em $r$-bit quantizer} consists of randomized mappings
$(\Qenc, \Qdec)$ with the encoder mapping $\Qenc:\R^d\to\{0,1\}^{r}$
and the decoder mapping $\Qdec: \{0,1\}^r\to \R^d$. The 
overall quantizer is given by the composition mapping $Q=\Qdec\circ
\Qenc$. Let $\Q_r$ be the set of all such $r$-bit quantizers.     

For an oracle $(f, O)\in \oO_{0, p}$ and an $r$-bit quantizer $Q$, 
let $QO= Q\circ O$ denote the 
composition oracle that outputs 
$Q(\hat{g}(x))$ for each query $x$. Let $\pi$ be an
algorithm with at most $T$ iterations with oracle access to
$QO$. We will call such an algorithm an {optimization  protocol}.  
Denote by $\Pi_T$ the set of all such optimization protocols with $T$
iterations. 
\begin{rem}[Memoryless, fixed length quantizers]
We note that the quantizers in $\Q_r$ are {memoryless} as well as {fixed length}
quantizers. That is, each new subgradient estimate at time $t$ will
be quantized without using any information from the previous updates to a fixed length code of $r$ bits.
\end{rem} 

Denoting the combined optimization protocol with its oracle $QO$ by 
$\pi^{QO}$ and the 
associated output as $x^*(\pi^{QO})$, we measure the performance of
such an optimization protocol for a given $(f,O)$ using the metric
$\ep_0(f, \pi^{QO}, p)$ defined as  $\ep_0(f, \pi^{QO}, p) \eqdef \E{f(x^*(\pi^{QO}))-\min_{x\in \X} f(x)}$.

Before proceeding, we recall the results for the case $r=\infty$.
These bounds will serve as a basic benchmark for our problem.
\begin{thm}\label{t:e_infty}
There exist absolute constants $c_0$ and $c_1$ where $c_1 \geq c_0 >0$ such that the following hold:
%\vspace{-0.15cm}
\begin{enumerate}
\item For $ p \geq 2$,
\[\displaystyle{
  \frac{c_1d^{1/2 -1/p} DB}{\sqrt{T}} \geq  \mathcal{E}_0^*{(T,\infty,p)}
\geq  \frac{c_0 d^{1/2 -1/p} DB}{\sqrt{T}};}\]
%\vspace{-0.25cm}
\item for\footnote{For certain range of $p$ closer to 2 the $\sqrt{\log d}$ factor can be removed; for simplicity, we state the slightly weaker result.} $2 >p \geq 1$,
\[\displaystyle{\frac{c_1  DB \sqrt{\log d}}{\sqrt{T}}  \geq
\mathcal{E}_0^*{(T,\infty,p)}
\geq   \frac{c_0DB }{\sqrt{T}}.}
\]
\end{enumerate}
\end{thm}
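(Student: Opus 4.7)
The statement assembles classical unrestricted-precision benchmarks for stochastic first-order optimization under $\ell_p$ geometry, so my plan is to split it into four subclaims and handle each by standard techniques.

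\textbf{Upper bounds.} For $p\geq 2$, I would run vanilla projected stochastic gradient descent in the $\ell_2$ geometry; two norm comparisons do all the work, since $\|\cdot\|_2\leq d^{1/2-1/p}\|\cdot\|_p$ gives the $\ell_2$-diameter of $\X$ at most $d^{1/2-1/p}D$, and $\|\cdot\|_2\leq\|\cdot\|_q$ (valid because $q\leq 2$) gives $\|\hat g(x)\|_2\leq B$ almost surely, so substituting $D_2=d^{1/2-1/p}D$ and $B_2=B$ into the textbook $D_2B_2/\sqrt T$ rate for SGD (cf.~\cite{nemirovski1995information}) produces the stated bound. For $p\in[1,2)$, I would instead use stochastic mirror descent with the regularizer $\Psi(x)=\frac{1}{2(p^\star-1)}\|x\|_{p^\star}^2$, picking $p^\star=1+1/\ln d$ when $p$ is close to $1$ and $p^\star=p$ otherwise. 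This $\Psi$ is $1$-strongly convex in $\|\cdot\|_{p^\star}$; the $p^\star$-diameter of $\X$ is $O(D)$ and the dual norm of $\hat g$ is $O(B)$ for the chosen $p^\star$, and the $\sqrt{\log d}$ factor in the stated rate is exactly $1/\sqrt{p^\star-1}$, coming from the strong-convexity modulus of $\Psi$.

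\textbf{Lower bound $c_0 DB/\sqrt T$ for all $p$.} I would use a one-coordinate Le Cam reduction: restrict attention to $\X'=[-D/2,D/2]\times\{0\}^{d-1}\subseteq \X$, take $f_\pm(x)=\mp\varepsilon x_1$, and give an oracle whose first coordinate is $\pm B$ with Bernoulli bias $\varepsilon/B$ and zero elsewhere, so that $\|\hat g\|_q\leq B$ holds trivially. The $T$-sample KL divergence between the two product oracles is $O(T\varepsilon^2/B^2)$, so Pinsker forces $\varepsilon=\Theta(B/\sqrt T)$ to keep the two instances indistinguishable; misidentifying the sign then costs $\Omega(\varepsilon D)=\Omega(DB/\sqrt T)$ in suboptimality.

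\textbf{Lower bound $c_0 d^{1/2-1/p}DB/\sqrt T$ for $p\geq 2$.} I expect this to be the main obstacle. I would adapt the Assouad-based construction of~\cite{agarwal2012information}: consider $2^d$ convex instances indexed by $\sigma\in\{-1,+1\}^d$ whose per-coordinate gradient magnitude is at scale $\Theta(B/d^{1/q})$, so the global $\ell_q$-gradient budget is just saturated; design a noisy oracle whose per-coordinate Bernoulli bias encodes $\sigma_i$ while keeping $\|\hat g\|_q\leq B$ almost surely; and apply Assouad's lemma with Hamming-distance loss to convert expected sign errors into suboptimality. The delicate step is balancing the per-coordinate Pinsker deficit against the shared $\ell_q$-budget: this coupling is precisely where the $d^{1/2-1/p}$ factor is produced, while the remaining pieces (passing from per-coordinate indistinguishability to expected Hamming error, and from Hamming error to suboptimality in $f_\sigma$) are routine.
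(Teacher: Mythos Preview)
The paper does not actually prove this theorem: it cites \cite{agarwal2012information} for both directions, and the two Remarks immediately following the statement sketch precisely the upper-bound arguments you propose (PSGD plus the two norm comparisons for $p\ge 2$; mirror descent with the $\|\cdot\|_{p^\star}^2/(p^\star-1)$ potential for $p\in[1,2)$). Your one-coordinate Le Cam argument for the $\Omega(DB/\sqrt T)$ lower bound is also fine. So on three of the four subclaims you are aligned with the paper and correct.

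The gap is in your $p\ge 2$ lower bound. The oracle you describe---independent $\pm B/d^{1/q}$ entries in every coordinate, with per-coordinate Bernoulli bias encoding $\sigma_i$, ``so the global $\ell_q$-gradient budget is just saturated''---is the product construction, and it does \emph{not} produce the $d^{1/2-1/p}$ factor. With bias $\delta$, the per-coordinate KL after $T$ rounds is $\Theta(T\delta^2)$, forcing $\delta=\Theta(1/\sqrt T)$; each mis-signed coordinate then costs $\Theta\bigl(\delta\cdot(B/d^{1/q})\cdot(D/d^{1/p})\bigr)$, and summing $\Theta(d)$ such errors via Assouad gives only $\Theta(\delta BD)=\Theta(DB/\sqrt T)$, with no residual dimension factor. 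The construction that actually yields $d^{1/2-1/p}$ is the \emph{sparse} one-hot oracle returning $\pm B\,e_I$ for a uniformly random $I\in[d]$ with the sign bias on coordinate $i$ encoding $\alpha_i$; this is what \cite{agarwal2012information} uses, and it is exactly the ``First Lower Bound'' construction in this paper's own Section~\ref{ss:lb-proof}. There the single-sample mutual information is $O(\delta^2)$ against a packing of size $2^{\Omega(d)}$, so Fano allows $\delta=\Theta(\sqrt{d/T})$, and the resulting bound is $\Theta(DB\delta/d^{1/p})=\Theta(d^{1/2-1/p}DB/\sqrt T)$. In short, the dimension factor comes from the oracle being information-poor (one revealed coordinate per query), not from saturating the $\ell_q$ budget coordinatewise as you conjecture.
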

\noindent The lower bounds and the upper bounds can be found, for instance, in \cite[Theorem 1]{agarwal2012information} and \cite[Appendix C]{agarwal2012information}.

\begin{rem}\label{r:Motivationfoalphapin12}
An optimal achievable scheme for $p \in [1, 2)$ is the stochastic
  mirror descent with the mirror maps 
  $\norm{x}_{p^\prime}^2/(p^\prime-1)$, where $p^\prime$ is
  chosen appropriately for a given $p$. These algorithms require only
  that the expected squared $\ell_q$ norm of the gradient estimates are
  bounded.
\end{rem}
\begin{rem}\label{r:Motivationfoalphap}
An optimal achievable scheme for $p$ greater than $2$
  is simply projected subgradient descent(PSGD). To see this, note
  that PSGD gives a guarantee of ${D^{\prime}B^{\prime}}/{\sqrt{T}}$
  ($cf.$~\cite{nemirovski1995information}), where $D^{\prime}$ is the
  $\ell_2$ diameter and $B^{\prime}$ is the bound on
  $\E{\norm{\hat{g}}_2^2}$. Using the monotonicity of $\ell_q$ norms
  in $q$,  for $q \geq 2$ we have
  $\E{\norm{\hat{g}}_2^2} \leq \E{\norm{\hat{g}}_q^2} \leq B^2$. Also, the
  $\ell_2$ diameter of a unit $\ell_p$ ball is
  $d^{1/2-1/p}$. It follows that PSGD attains the upper bounds
in Theorem~\ref{t:e_infty}.
\end{rem}

The fundamental quantity of interest in this work is $r^*(T, p)$, the
minimum precision to achieve the optimization accuracy $\mathcal{U}(T,
p)$, the benchmark above from the classic setting. Specifically, we define
\begin{align}\label{e:r*} 
r^*(T, p)
&\eqdef \inf \{r \in \N: \mathcal{E}_0^*{(T,r, p)}  \leq
\mathcal{U}(T, p) \},    
 \end{align}
where\footnote{We take the  $\sup$ over all $X \in \mathbb{X}$ in the
  definition of  $\mathcal{E}_0^*{(T,r, p)}$, as it is unclear even in
  the infinite precision case if we can get matching upper bounds and
  lower bounds for any convex set in $\mathbb{X}$.} 
  \vspace{-0.15cm}
 \begin{align}\label{e:e_r}
  \mathcal{E}_0^*{(T,r, p)} &\eqdef \sup_{\X \in \mathbb{X}} \inf_{\pi \in \Pi_T}\inf_{Q \in
    \mathcal{Q}_r}\sup_{(f, O) \in \oO_{0,p}}\ep(f, \pi^{QO}, p),
  \\
  \nonumber
  \mathcal{U}(T, p) &:= \frac{4c_1d^{1/2 -1/p} DB}{\sqrt{T}}, \quad \forall p \in [2, \infty],\\ \nonumber
  \mathcal{U}(T, p) &:= \frac{4c_1 \sqrt{\log d} DB }{\sqrt{T}}, \quad \forall p \in [1, 2).
 \end{align}

\subsection{A Basic Convergence Bound for Quantized Gradients}
While our lower bounds hold for any kind of quantizers,
but we attain this bounds using unbiased quantizers. For 
such an unbiased quantizer $Q$, we characterize the performance in
terms of a parameter $\alpha_0(Q)$ defined earlier
in~\cite{mayekarratq}. We define

\[\displaystyle{\alpha_{0}(Q; p) := \sup_{Y \in \R^d: \norm{Y}_q^2\leq B^2
  \text{ a.s.}} 
\sqrt{\E{\norm{Q(Y)}_2^2}}, \quad  p \in [2, \infty]}, \]

\[\displaystyle{\alpha_{0}(Q; p) := \sup_{Y \in \R^d: \norm{Y}_q^2\leq B^2 \text{ a.s.}}
      \sqrt{\E{\norm{Q(Y)}_q^2}}, \quad p\in [1,2).}\]
      
%\begin{align*}
%&\lefteqn{\alpha_{0}(Q; p)} 
%\\
%&\hspace{0.55cm}\eqdef 
%\begin{cases}
%   \sup_{Y \in \R^d: \norm{Y}_q^2\leq B^2
%  \text{ a.s.}} 
%\sqrt{\E{\norm{Q(Y)}_2^2}}, \quad  p \in [2, \infty], 
%\\
%\sup_{Y \in \R^d: \norm{Y}_q^2\leq B^2 \text{ a.s.}}
%      \sqrt{\E{\norm{Q(Y)}_q^2}}, \quad p\in [1,2).
%\end{cases}
%  \end{align*}
Note that for all $p\geq 1$, the composed oracle $QO$ 
satisfies assumption \eqref{e:asmp_unbiasedness}.
Moreover, in view of Remarks~\ref{r:Motivationfoalphapin12}
and~\ref{r:Motivationfoalphap}, we have the following convergence 
guarantees for first-order stochastic optimization using gradients
quantized by $Q$. 

\begin{thm}\label{t:e_uq}
Consider a quantizer $Q$ for the gradients. There exists algorithms
$\pi \in \Pi_T$ which when used with oracle outputs quantized by $Q$
performs as follows.
\begin{enumerate}
%\vspace{-0.2cm}
\item For $ p \geq 2$,
\[\displaystyle{
  \frac{c_1d^{1/2 -1/p} D\alpha_{0}(Q; p)}{\sqrt{T}}\geq \sup_{(f, O)
    \in \oO_{0,p}}\ep(f,\pi^{QO}, p);}
  \]

\item for $2 > p \geq 1$,
\[
\displaystyle{\frac{c_1\sqrt{\log d}D \alpha_{0}(Q; p)}{\sqrt{T}} \geq \sup_{(f, O) \in \oO_{0,p}}\ep(f,\pi^{QO}, p)
.}
\]
\end{enumerate}
\end{thm}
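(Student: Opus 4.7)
The plan is to reduce the claim to the classical (infinite-precision) convergence guarantees recalled in Remarks~\ref{r:Motivationfoalphapin12} and~\ref{r:Motivationfoalphap}, by verifying that the composed oracle $QO$ fits the hypotheses of those results with the parameter $\alpha_0(Q;p)$ playing the role of the almost-sure subgradient bound. The crux is that $Q$ is assumed unbiased, so $\E{Q(\hat g(x)) \mid x} = \E{\hat g(x) \mid x} \in \partial f(x)$, which means $QO$ still satisfies \eqref{e:asmp_unbiasedness}. It does not satisfy \eqref{e:asmp_as_bound}, but the classical algorithms of Remarks~\ref{r:Motivationfoalphapin12}--\ref{r:Motivationfoalphap} only require a bound in expectation on the appropriate norm of the gradient estimate, which is precisely what $\alpha_0(Q;p)$ captures.

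For $p \geq 2$, I would take $\pi$ to be projected stochastic subgradient descent (PSGD) in $\ell_2$, as in Remark~\ref{r:Motivationfoalphap}. Since $\X$ has $\ell_p$-diameter at most $D$, its $\ell_2$-diameter is at most $d^{1/2-1/p}D$ by monotonicity of $\ell_p$ norms under embedding into $\R^d$. By the definition of $\alpha_0(Q;p)$ for $p \geq 2$,
\[
\E{\norm{Q(\hat g(x))}_2^2 \mid x} \;\leq\; \alpha_0(Q;p)^2
\]
for every query $x$, using the almost-sure bound $\norm{\hat g(x)}_q \leq B$. Plugging these two bounds into the standard PSGD guarantee $D' B'/\sqrt{T}$ with $D' = d^{1/2-1/p}D$ and $B' = \alpha_0(Q;p)$ yields the claimed rate, up to the same absolute constant $c_1$ used in Theorem~\ref{t:e_infty}.

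For $p \in [1,2)$, I would instead take $\pi$ to be stochastic mirror descent with mirror map $\norm{x}_{p'}^2/(p'-1)$ for an appropriate $p' \in (1,2]$, as in Remark~\ref{r:Motivationfoalphapin12}. The standard analysis of this scheme gives a rate of the form $c_1 \sqrt{\log d} \cdot D \cdot B'/\sqrt{T}$ whenever the gradient estimates have $\E{\norm{\cdot}_q^2} \leq (B')^2$. By the definition of $\alpha_0(Q;p)$ for $p < 2$,
\[
\E{\norm{Q(\hat g(x))}_q^2 \mid x} \;\leq\; \alpha_0(Q;p)^2,
\]
so choosing $B' = \alpha_0(Q;p)$ gives the second bound.

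There is no real obstacle here beyond bookkeeping: the content of the theorem is that $\alpha_0(Q;p)$ is the right ``effective noise level'' for the quantized oracle, and this is built directly into its definition for the two ranges of $p$. The only point requiring care is to confirm that the absolute constants delivered by PSGD and by mirror descent can be taken to be the same $c_1$ as in Theorem~\ref{t:e_infty} (which is immediate, since the proofs of Theorem~\ref{t:e_infty} proceed through exactly the same two algorithms, with $B$ in place of $\alpha_0(Q;p)$).
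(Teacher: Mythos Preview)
Your proposal is correct and follows essentially the same approach as the paper: the paper does not give a separate formal proof of this theorem but simply observes (in the paragraph preceding the statement) that the composed oracle $QO$ is unbiased and then invokes Remarks~\ref{r:Motivationfoalphapin12} and~\ref{r:Motivationfoalphap}, exactly as you do. Your write-up is in fact more detailed than what the paper provides.
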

In the rest of the paper, we design unbiased, fixed length
quantizers which have $\alpha_0( \cdot;p)$ of the same order as
$B$. Then, using Theorem \ref{t:e_uq}  the quantized updates give the
same convergence guarantees as that of the classical case, which 
leads to upper bounds for $r^*(T, p)$. Further, we derive lower bounds for
$r^*(T, p)$ to prove optimality of our quantizers.

An interesting insight offered by the result above, which is perhaps simple in
hindsight, is that even when dealing with $\ell_p$ oracles for $p\geq 2$, we only need to be concerned about the expected $\ell_2$ norm of
the quantizers output. It is this insight that leads to the realization that $SimQ^+$ is optimal for these settings. 

%% Our earlier attempts that tried to recover
%% quantized vectors with $\ell_p$ bounded norms fell much short of optimality.

%\input{results}
\section{Main Result: Characterization of $r^*(T, p)$}
   The main result of our paper is the almost complete
   characterization of $r^*(T,p)$. We divide the result into cases
   $p\in [1,2)$ and $p\geq 2$; as mentioned earlier, we use different
     quantizers for these two cases.
  \begin{thm}~\label{t:main}
    For stochastic optimization using $T$ accesses to a first-order
    oracle, the following bounds for $r^*(T,p)$ hold.
\begin{enumerate}
%% \item For $p=\infty,$ we have
%% \eq{ \log(2d+1) \geq r^*(T, \infty) \geq \left(\frac{c_0}{2c_1}\right)^2 \cdot \log d.}
\item \newest{ For $p\geq 2$, we have \eq{ d^{2/p}\log &\,(2e\cdot d^{1-2/p}+2e) 
  \geq r^*(T, p) \geq \left(\frac{c_0}{4c_1}\cdot d^{1/p}\right)^2 \vee 
2 \log  \left(\frac{c_0}{4c_1}\cdot d^{1/2}\right) .} } % where $a\vee b$ denotes
  %$\max\{a,b\}$.
  \item For $2 > p \geq 1,$ we have \eq{d\left(
\ceil{\log(2\sqrt{2} {\Delta_1}^{1/q}+2)}+3\right) +\Delta_2   \geq ~ r^*(T,
    p) \geq \left(\frac{c_0}{4c_1 \sqrt{\log d}}\right)^2 \cdot d,}
    where
$\Delta_1 =\ceil{\log\left(2+\sqrt{  18 + 6\ln \Delta_2} \cdot d^{1/2-1/q}\right)}$ and $\Delta_2 =\ceil{\log (1+\ln^*({d}/{3})) }.$
\end{enumerate}
\end{thm}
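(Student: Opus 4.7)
The plan is to use Theorem~\ref{t:e_uq} to reduce the upper bounds on $r^*(T,p)$ to designing unbiased, fixed-length quantizers $Q$ with $\alpha_0(Q;p) \le 4B$, and to derive the lower bounds by combining the minimax arguments of~\cite{agarwal2012information} with information-theoretic constraints on the oracle transcript. Concretely, once $\alpha_0(Q;p) \le 4B$, Theorem~\ref{t:e_uq} yields the benchmark rate $\mathcal{U}(T,p)$, so $r^*(T,p)$ is at most the bit budget of $Q$; conversely, we need to argue that any $r$-bit scheme with $r$ below the claimed threshold fails to attain $\mathcal{U}(T,p)$ on some $(f,O)\in\oO_{0,p}$.

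For the upper bound in the $p\ge 2$ case, I would first build $SimQ$: since $q\le 2$ gives $\|Y\|_1 \le d^{1-1/q}\|Y\|_q \le d^{1/p}B$, any $Y$ with $\|Y\|_q\le B$ lies in the convex hull of the $\ell_1$-ball corners $\{\pm d^{1/p}B\, e_i\}_{i=1}^d\cup\{0\}$. Sampling a corner with probability equal to its convex-combination weight gives an unbiased $\lceil\log(2d{+}1)\rceil$-bit quantizer with $E\|SimQ(Y)\|_2^2 \le d^{1/p}B\cdot\|Y\|_1 \le d^{2/p}B^2$. $SimQ^+$ would then average $n$ independent copies of $SimQ$, giving
\[
E\|SimQ^+(Y)\|_2^2 \le \|Y\|_2^2 + \tfrac{1}{n}\, d^{2/p} B^2 \le B^2 + \tfrac{d^{2/p}}{n}B^2,
\]
so $n=\Theta(d^{2/p})$ suffices to force $\alpha_0 \le 4B$. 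To get the sharper per-call bit count $\log(2e d^{1-2/p}+2e)$ (rather than the naive $\log(2d)$), I would partition the $d$ coordinates into groups of size $d^{1-2/p}$ and have each of the $d^{2/p}$ repetitions pick one of the $\approx 2e(d^{1-2/p}+1)$ corners only inside its own group; the disjointness of groups makes the resulting variance telescope to the bound above without inflating constants. The $p=\infty$ and $p=2$ endpoints drop out as the extremes $n=1$ and $n=d$ of this construction.

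For the upper bound in the $p\in[1,2)$ case, I would split the coordinates of $Y$ into a ``small'' part and a ``large'' part according to a magnitude threshold chosen so that the number of large coordinates is $O(d^{1-2/q})$ish. The small part is quantized by a coordinate-wise uniform stochastic quantizer with $\lceil\log(2\sqrt{2}\Delta_1^{1/q}+2)\rceil+3$ bits per coordinate, while the large part is handled by the RATQ quantizer of~\cite{mayekarratq}, whose rotation+adaptive-bucketing design uses only $\Delta_2 = \lceil\log(1+\ln^*(d/3))\rceil$ extra bits overall and whose sub-exponential guarantees are calibrated precisely by $\Delta_1$. Bounding $\alpha_0(Q;p) = \sqrt{E\|Q(Y)\|_q^2}$ then requires combining H\"older's inequality on the two parts with the RATQ concentration lemma from~\cite{mayekarratq}; the threshold is chosen exactly so that both contributions are $O(B)$.

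For the lower bounds, I would specialize the Fano-style construction from~\cite[Theorem 1]{agarwal2012information}, indexing a family of hard convex functions by sign patterns $\alpha\in\{\pm 1\}^d$ (or, for $p\ge 2$, by sparse subsets) and adapting the $\ell_p$-geometric normalization so that the unconstrained minimax error across the family matches $\mathcal{U}(T,p)$ up to the ratio $c_0/(4c_1)$. A quantized protocol observes at most $Tr$ bits about the hidden $\alpha$, so Fano forces the accuracy to degrade unless $Tr$ exceeds the log-cardinality of the separated packing; converting this into the per-call bound $r^*(T,p) \ge (c_0/(4c_1))^2 \cdot d^{2/p}$ (for $p\ge 2$, picking up the $d^{2/p}$ from the $\ell_p$-diameter scaling of the packing), respectively $r^*(T,p) \ge (c_0/(4c_1\sqrt{\log d}))^2 d$ (for $p\in[1,2)$, where the $\sqrt{\log d}$ comes from mirror-descent's rate), gives the claimed thresholds; the $2\log(\cdot)$ term for $p\ge 2$ comes from the separate requirement of encoding a single sign bit per distinguishable direction. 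The main obstacle is the sharp upper bound for $p\ge 2$: getting the bit count exactly $d^{2/p}\log(2ed^{1-2/p}+2e)$ instead of $d^{2/p}\log(2d)$ needs the careful group-by-group decomposition above, and verifying that the telescoped variance yields $\alpha_0 \le 4B$ with the specific constant $4$ is where the bulk of the calculation lives.
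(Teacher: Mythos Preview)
Your plan for the $p\in[1,2)$ upper bound (split into small/large coordinates, uniform quantizer plus RATQ) matches the paper's construction and should go through. But there are two genuine gaps elsewhere.

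\textbf{Upper bound for $p\ge 2$.} The group-by-group trick you propose for getting the bit count down from $d^{2/p}\log(2d)$ to $d^{2/p}\log(2e\,d^{1-2/p}+2e)$ does not work as stated. If each of the $n=d^{2/p}$ repetitions is a SimQ run restricted to its own disjoint block of $d^{1-2/p}$ coordinates, then the outputs live on orthogonal subspaces and you must \emph{sum} them (not average) to get an unbiased estimate of $Y$. But then
\[
\E{\norm{Q(Y)}_2^2}=\sum_{j=1}^{n}\E{\norm{Q_j(Y|_{G_j})}_2^2}
= d^{1/p}B\sum_{j}\norm{Y|_{G_j}}_1 = d^{1/p}B\,\norm{Y}_1 \le d^{2/p}B^2,
\]
exactly the same as a \emph{single} SimQ run on all of $Y$; nothing telescopes, and $\alpha_0(Q;p)$ is $d^{1/p}B$, not $O(B)$. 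The paper's route is different: it really does average $k=d^{2/p}$ independent SimQ runs on the \emph{full} vector, and then observes that the decoder only needs the \emph{type} (empirical histogram) of the $k$ sampled indices together with one sign bit per nonzero index. The type lies in a set of size $\binom{d+k}{k}\le (e(d/k+1))^k$, which is where the $\log(2e\,d^{1-2/p}+2e)$ per repetition comes from. Your $2e$ factor was a hint you were reaching toward this count, but the mechanism is combinatorial compression of repeated samples, not a coordinate partition.

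\textbf{Lower bounds.} The argument ``a quantized protocol observes at most $Tr$ bits, so Fano forces $Tr\gtrsim \log|\mathcal V|$'' only yields $r\gtrsim d/T$ (or $d^{2/p}/T$), which is useless as $T\to\infty$ and certainly not $T$-independent. What the paper uses is a \emph{strong data processing inequality}: for the Bernoulli-product oracle one has $I(V;Q(Y))\le C\,\delta^2\min\{r,d\}$ (and for the Paninski-type oracle $I(V;Q(Y))\le \delta^2\min\{2^r,d\}/d$), so Fano gives
\[
\ep \;\gtrsim\; DB\,\delta\Bigl(1-\tfrac{T\delta^2\min\{r,d\}}{c_2 d}\Bigr).
\]
Optimizing over $\delta$ now produces $\ep\gtrsim DB\sqrt{d/(T\min\{r,d\})}$, and comparing with $\mathcal U(T,p)$ cancels $T$ and yields the claimed $T$-free thresholds on $r$. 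Without the $\delta^2$ factor in the mutual-information bound, the optimization over $\delta$ decouples from $r$ and the argument collapses.
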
 
\noindent Note that for $p \geq 2$ the upper bounds and lower bounds for  $r^*(T,p)$ are off by nominal factor of $\log (d^{1-2/p}+1)$. Also, for $p \in [1,  2)$ the bounds are roughly off by $O(\log d \cdot\log (\log d^{1/2-1/q})^{1/q})$ (ignoring the $\log^*d$ terms).

  We present the quantizers achieving these upper bounds, and the
  proof of the upper bounds, in the next two sections. For $p \geq 2$,
  we use a quantizer $SimQ$ and its extension $SimQ^+$, presented in
  Section~\ref{s:SimQ}. For $p\in [1,2)$, we use a combination of
    uniform quantization and the quantizer RATQ
    from~\cite{mayekarratq}, presented in
   Section~\ref{s:p-less-than-two}.
 
% One of the quantizers we design in this papers termed $SimQ^+$ and described in Section \ref{s: p[1,2)} achieves the optimal convergence guarantees of the classic, infinite precision case by using a fixed-length precision of $O(d)$ bits, thereby improving on prior work in terms of worst-case precision. 

   Our lower bound proof is based on small modification of existing
   proofs. But an interesting element is that constructions that yield
   trivial bounds for convergence rate yield tight bounds when
   information constraints such as constraints on gradient precision
   are placed. The proof is deferred to Section~\ref{ss:lb-proof}.

We highlight the most interesting features of the result above in
separate remarks below.  
\begin{rem}[$r^*(T, p) $ is independent of $T$]
Theorem \ref{t:main} shows that $ r^*(T, p)$ is a function only of $p$
and $d$, and is
independent of $T$.  The number of queries $T$ is a proxy for the desired
optimization accuracy. Therefore, the fact that $r^*(T, p)$ is
independent of such a parameter is interesting.  We note, however,
that for oracle models with milder assumptions, such as mean square
bounded oracles, this may not hold. In fact, the results of~\cite{mayekarratq} suggest that for mean square bounded oracles $r^*(T,2)$ is dependent on $T$.
\end{rem}

\begin{rem}[Optimality for $p=\infty$]
  Our bounds match for $p=\infty$, namely our quantizer $SimQ$ offers optimal
  convergence rate with gradient updates at the least precision. A
  surprising observation is that this precision is merely $O(\log d)$,
  much smaller than $O(d)$ bits needed to recover the gradient vector
  under any reasonable loss function.
 \end{rem}

\begin{rem}[Optimality for $p=2$]
The lower bound for the case when $p=2$ already appeared in  \cite{mayekarratq},
giving $r^*(T, 2) = \Omega(d)$. Both \cite{alistarh2017qsgd} and
\cite{suresh2017distributed} give variable-length quantization schemes
to achieve this lower bound, but the worst-case precision can be
order-wise greater than $d$. Our recent work \cite{mayekarratq}
proposed a quantizer termed RATQ that was within a small factor of
$O(\log\log \log \ln ^*d)$ of this lower bound. Current work removes
this nagging factor using a different quantizer $SimQ^+$.
\end{rem}

\begin{rem}[Fixed precision]
The quantizer RATQ in~\cite{mayekarratq} remains
optimal upto a factor of $O(\sqrt{\log \ln^* d})$ for the more general problem of  characterizing $\mathcal{E}_0^*(T, r, 2)$
for any precision $r$ less than $d$ bits. In this
setting of small precision, the performance of $SimQ^+$ is much worse.
%
%We clarify that the quantizer RATQ in~\cite{mayekarratq} remains
%optimal upto a factor of $O(\log \ln^* d)$ even for the more general case
%when the precision $r$ can be even less than $d$ bits. In this
%setting, the performance of $SimQ^+$ is much worse.
\end{rem}

\section{Our quantizers for $p \geq 2$}\label{s:SimQ}

We present our quantizer $SimQ$ and its extension $SimQ^+$. The former
is seen to be optimal for $p=\infty$ while the latter for $p=2$.

\subsection{An optimal quantizer for $p=\infty$}\label{s:infty}

\vspace{-.25cm}
\begin{figure}[ht]
\centering
\begin{tikzpicture}[scale=1, every node/.style={scale=1}]
\node[draw, text width= 8 cm, text height=,] {%
\begin{varwidth}{\linewidth}
            
            \algrenewcommand\algorithmicindent{0.7em}
            \renewcommand{\thealgorithm}{}
\begin{algorithmic}[1]
\Require Input $Y\in \R^d$, Parameter $B$ \State $i^*=\begin{cases}
i\quad w.p. \quad |Y(i)|/B\\ 0 \quad w.p. \quad 1-\norm{Y}_1/B
 \end{cases} $

\If { $i^* \in [d]$} \Statex \hspace{1cm} $j^*=sign(Y(i^*))$

\Else \Statex \hspace{1cm} ~$j^*=1$

  \EndIf \State \textbf{Output:} $\Qenc_{{\tt SimQ}}(Y; B) =i^*\cdot
  j^* $ \
\end{algorithmic}
\end{varwidth}};
 \end{tikzpicture}
 %\vspace{-0.2cm} \renewcommand{\figurename}{Algorithm}
 \caption{Encoder $\Qenc_{{\tt SimQ}}(Y; B)$ for
   $SimQ$}\label{a:E_SIMQ}
 \end{figure}

\vspace{-0.5cm}
 \begin{figure}[ht]
\centering
\begin{tikzpicture}[scale=1, every node/.style={scale=1}]
\node[draw, text width= 8 cm, text height=,] {%
\begin{varwidth}{\linewidth}
            \algrenewcommand\algorithmicindent{0.7em}
            \renewcommand{\thealgorithm}{}
            \renewcommand{\thealgorithm}{}
\begin{algorithmic}[1]
    \Require Input $i^\prime \in \{-d, -(d-1), \cdots 0,\cdots, d\} $
    \If { $i^\prime \neq 0$} \Statex \hspace{1cm} $Z=B
    sign(i^{\prime}) e_{|i^{\prime}|} $

\Else \Statex \hspace{1cm} ~$Z=0$

  \EndIf

\State \textbf{Output:} $\Qdec_{{\tt SimQ}}( i^\prime; B)=Z$
\end{algorithmic}
\end{varwidth}};
 \end{tikzpicture}
% \vspace{-0.2cm}
 \renewcommand{\figurename}{Algorithm}
 \caption{Decoder $\Qdec_{{\tt SimQ}}(i^\prime; B)$ for
   $SimQ$}\label{a:D_SIMQ} %\vspace{-0.5cm}
 \end{figure}
% \vspace{-0.3cm}
 \paragraph*{\textbf{ Simplex Quantizer ($\textbf{SimQ}$)}}
Our first quantizer $SimQ$ is described in Algorithms~\ref{a:E_SIMQ}
and~\ref{a:D_SIMQ}.  For $p=\infty$, our quantizer's input vector $Y$
is an unbiased estimate of the subgradient of the function at the
point queried and satisfies $\norm{Y}_1 \leq B$. $SimQ$ takes such a
$Y$ as an input and produces an output vector which, too, satisfies
both these properties. The main idea behind $SimQ$ is the fact that
any point inside the unit $\ell_1$ ball can be represented as a convex
combination of at the most $2d$ points: $\{e_i,-e_i: i \in [d],j \in
\{-1, 1\}\}$. With this observation, we use an $\ell_1$ sampling
procedure to obtain an unbiased estimate $i^*.j^*$ of the vector.
%% sample one of the
%% coordinates using the following probability distribution on the
%% $d+1$-dimensional simplex: $P(0)=1-\norm{Y}_1/B$, $P(i)= |Y(i)/B|
%% \quad \forall i \in [d]$.
%% We encode the product of the index of that
%% coordinate and the sign corresponding to that coordinate; trivially,
%% we need a precision of $\log (2d +1)$ bits to do this using a fixed
%% length code.
  At the decoder, upon observing $i^\prime=i^*.j^*$, the decoder
  simply declares $B j^*e_{i^*}$.% We summarize the performance of
  %$SimQ$ in the result below.

%%   we multiply the standard basis vector
%% corresponding to the index transmitted  with $B$ times the sign of the
%% encoded symbol to get an unbiased estimate of the input vector. The
%% encoder and the decoder are described in \eqref{a:E_SIMQ} and
%% \eqref{a:D_SIMQ}, respectively.  

  \begin{thm}\label{t:SimQ}
 Let $Q$ be the quantizer $SimQ$ described in Algorithms \ref{a:E_SIMQ}, \ref{a:D_SIMQ}. Then, for 
  $Y$ such that $\norm{Y}_1 \leq B~a.s.$,
   $Q(Y)$ can be represented in $\log (2d+1)$ bits, $\E{Q(Y)|Y}=Y$, and
   $\alpha_0(Q, \infty)\leq B$.
 %%   For an $\R^d$-valued random variable $Y$ such that $\norm{Y}_1 \leq B$ {a.s.}, let $Q(Y)$ be the output of $SimQ$ quantizer described  in Algorithms \ref{a:E_SIMQ} and \ref{a:D_SIMQ}. Then, we have that
 %% $Q(Y)$ can be represented using at the most $\log (2d+1)$ bits;
 %% $\E{Q(Y)|Y }=Y$;
 %% $\alpha_0(Q; \infty) \leq B$.
 \end{thm}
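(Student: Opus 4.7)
The plan is to verify each of the three claimed properties in turn, since each follows directly from the construction and the hypothesis $\|Y\|_1\le B$ a.s.

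First, I will argue the rate bound. The encoder's output $i^* \cdot j^*$ lives in the set $\{0\} \cup \{\pm 1, \pm 2, \ldots, \pm d\}$: the special convention $j^*=1$ when $i^*=0$ collapses the two possible representations of $0$ into one symbol, so the alphabet has exactly $2d+1$ distinct values. Hence $\log(2d+1)$ bits suffice to transmit it. I will also observe that for the randomization defining $i^*$ to be a valid probability distribution we need the probabilities to sum to $1$, which amounts to $\|Y\|_1/B \le 1$; this is precisely the a.s.\ hypothesis, so the encoder is well-defined.

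Second, for unbiasedness, I will compute the conditional expectation of the decoder's output directly. Given $Y$, the decoder outputs $B\,\mathrm{sign}(Y(i))\,e_i$ with probability $|Y(i)|/B$ for each $i\in[d]$, and outputs $0$ with the remaining probability $1 - \|Y\|_1/B$. By linearity,
\eq{
\E{Q(Y)\mid Y} \;=\; \sum_{i=1}^d \frac{|Y(i)|}{B}\cdot B\,\mathrm{sign}(Y(i))\,e_i \;=\; \sum_{i=1}^d Y(i)\,e_i \;=\; Y.
}

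Third, for the bound on $\alpha_0(Q,\infty)$, the key observation is that the decoder's output is always either the zero vector or a signed scaled standard basis vector $\pm B e_{i^*}$, so $\|Q(Y)\|_2^2$ is a Bernoulli-like random variable taking value $B^2$ when $i^*\neq 0$ and $0$ otherwise. Therefore
\eq{
\E{\|Q(Y)\|_2^2 \mid Y} \;=\; B^2 \cdot \Pr(i^*\neq 0 \mid Y) \;=\; B^2\cdot \frac{\|Y\|_1}{B} \;=\; B\,\|Y\|_1 \;\le\; B^2,
}
where the last step uses the a.s.\ hypothesis. Taking the supremum over admissible $Y$ and the square root yields $\alpha_0(Q,\infty)\le B$.

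There is no real obstacle in this proof; each of the three claims reduces to a one-line calculation once one notices that the construction is essentially an $\ell_1$-sampling scheme whose outputs are the $2d+1$ corner points $\{0\}\cup\{\pm B e_i\}$ of a scaled $\ell_1$ ball. The only place where care is needed is in describing the alphabet precisely so as to match the $\log(2d+1)$ rate rather than something slightly larger (e.g., $1+\log(d+1)$), and in justifying that $Y$ lies inside the convex hull of these corners, which is exactly the hypothesis $\|Y\|_1\le B$.
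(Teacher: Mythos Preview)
Your proof is correct and follows essentially the same approach as the paper: the rate bound and unbiasedness calculations are identical, and for the $\alpha_0$ bound the paper simply observes $\|Q(Y)\|_2^2 \le B^2$ a.s.\ (since the output is either $0$ or $\pm B e_i$), whereas you compute the conditional expectation explicitly to get the slightly sharper $B\|Y\|_1$, but both routes immediately yield the same conclusion.
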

 \begin{proof}
Since $i^*\in[d]$ and $j^*\in\{-1,1\}$, we can represent the output of
the encoder of $SimQ$ using $\log (2d+1)$ bits. Next, denoting
the quantizer $SimQ$ by $Q$, note that
\[
\E{Q(Y)|Y }= \sum_{i=1}^d B \cdot sign(Y(i))\cdot e_i \cdot
\frac{|Y(i)|}{B} = Y,
\]
namely $SimQ$ is unbiased. To complete the proof, note that
$\norm{Q(Y)}_2^2 \leq B^2~a.s.$.
\end{proof}
 Theorem \ref{t:SimQ} along with Theorem \ref{t:e_uq} establishes 
Theorem~\ref{t:main} for $p=\infty$.

\subsection{Our Quantizer for $p \in [2, \infty)$}
For this case,
    we need to quantize inputs that are bounded in
    $\ell_q$ norm with $q\in (1,2]$ so that the quantized output is unbiased and has small expected $\ell_2$ norm square; we will use $SimQ^+$ to do this.
%% We now describe our quantizer, termed {\em{$SimQ^+$}}, used to quantize unbiased subgradient estimates which are almost surely bounded in the $q$ norm, for $q \in [1,2).$

\paragraph*{$\textbf{SimQ}^+$}
The quantizer $SimQ^+$ outputs the average of $k$ independent repetitions of the $SimQ$
quantizer for a given input vector. The input vectors $Y$ satisfy
 $\norm{Y}_1\leq B d^{1/p}$. Therefore, we use $SimQ$ with
parameter $B d^{1/p}$ instead of $B$. The repetitions help reduce the
error to compensate for the extra loss factor. Specifically, the output of
$SimQ^+$ denoted by  $Q(Y)$ is given by
%\vspace{-0.1cm}
 \begin{align}\label{e:SIMQ+}
 Q(Y) =\frac{1}{k}\cdot\sum_{i =1}^{k} Q^i_{\tt SimQ}(Y; Bd^{1/p}),
  \end{align}
  %\vspace{-0.1cm}
 where $Q^i_{\tt SimQ}$ are independent iterations of $SimQ$.

 % \paragraph*{Compression via method of types}

The next component of $SimQ^+$ is how the encoder of $SimQ^+$ expresses the output of these $k$ copies of $SimQ$ to attain compression. If represented naively, this
will require $O(d^{2/p}\log d)$. But we can do much better since we only need the average value of these entries. For that, we can simply
report the {\em type} of this vector -- the frequency of each
 index in the $k$ length sequence. The signs of the
 input coordinates for the non-zero entries can be sent separately.

 Note that there are $d+1$ indices
 overall, as $SimQ$ can pick any index from $\{0, \ldots d\}$.
 Therefore, the total number of types is ${{d+k}\choose{k}}$, which
 can at the most be $(\frac{ed+ek}{k})^k$ bits. Hence, the precision
 needed to represent the type is at the most $k \log e +
 k\log(\frac{d}{k}+1)$.

 The type of the input can be used to determine a set $\mathcal{I}_0$ of non-zero indices
 that appear at least once. There are at most $k$ such entries. Therefore, we can use
 a binary vector of length $k$ to store the signs for these entries. We use this representation in $SimQ^+$, with the indices in $\mathcal{I}_0$ represented in the vector in increasing order.
% The performance of $SimQ^+$ is summarized in the result below.
% 
 
 %% The sign bit-string is sent as follows. We
 %% initialize it to a null string and start from the smallest index in
 %% $[d]$ and check if this index was selected in anyone of the $k$ SimQ
 %% repetitions; if it did, we add one bit for the sign of the
 %% corresponding input coordinate at the end of the sign bit string; if
 %% it did not, we do nothing. Then, we move on to the next index where
 %% we repeat the same process. The sign bit-string can be represented by
 %% at the most $k$ bits. Therefore, we can represent the output of
 %% $SimQ^+$ by $k \log e + k\log(\frac{d}{k}+1)+k$ bits.

\begin{thm}\label{t:p2infty}
 For a $p \in [2, \infty)$, let $Q$ be the quantizer $SimQ^+$ described in \eqref{e:SIMQ+}. Then, for 
  $Y$ such that $\norm{Y}_q \leq B~a.s.,$
   $Q(Y)$ can be represented in $k \log e + k\log(\frac{d}{k}+1)+k$ bits, $\E{Q(Y)|Y}=Y$, and
   $\displaystyle{\alpha_0(Q;p) \leq \sqrt{\frac{B^2 d^{2/p}}{k}+B^2}}$.
 \end{thm}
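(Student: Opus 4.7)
The plan is to verify the three conclusions of the theorem in order: the bit count, unbiasedness, and the bound on $\alpha_0(Q;p)$. The first two are quick consequences of the construction and of Theorem~\ref{t:SimQ}, while the third is the main calculation and requires a short second-moment argument combined with norm inequalities.

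\textbf{Bit count.} I would first justify the bit representation exactly as sketched in the discussion preceding the theorem. The encoder produces a length-$k$ sequence of symbols from the alphabet $\{0, 1, \ldots, d\}$ (the absolute values of the $SimQ$ outputs), together with a sign for each nonzero position. The type (i.e.\ the multiset of symbols) is transmitted using at most $\log\binom{d+k}{k} \leq k\log e + k\log(d/k + 1)$ bits, and the signs of the at-most-$k$ distinct nonzero indices are transmitted with an additional $k$ bits. Since averaging depends only on the type plus the signs, this is sufficient to reconstruct $Q(Y)$.

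\textbf{Validity of inner calls and unbiasedness.} Before invoking Theorem~\ref{t:SimQ} on each $Q^i_{\tt SimQ}$, I need to check that the hypothesis $\norm{Y}_1 \leq Bd^{1/p}$ holds whenever $\norm{Y}_q \leq B$ with $q \in (1, 2]$. This follows from the standard finite-dimensional inequality $\norm{Y}_1 \leq d^{1-1/q}\norm{Y}_q = d^{1/p}\norm{Y}_q$. Then Theorem~\ref{t:SimQ} gives $\E{Q^i_{\tt SimQ}(Y; Bd^{1/p}) \mid Y} = Y$ for each $i$, and averaging preserves unbiasedness: $\E{Q(Y) \mid Y} = Y$.

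\textbf{Bound on $\alpha_0(Q;p)$.} Let $Z_i \eqdef Q^i_{\tt SimQ}(Y; Bd^{1/p})$ for $i \in [k]$. Conditionally on $Y$, the $Z_i$ are iid with mean $Y$. Expanding the squared norm of the average and using independence of the cross terms yields
\[
\E{\norm{Q(Y)}_2^2 \mid Y} = \frac{1}{k^2}\sum_{i=1}^k \E{\norm{Z_i}_2^2 \mid Y} + \frac{1}{k^2}\sum_{i \neq j} \langle \E{Z_i \mid Y}, \E{Z_j \mid Y}\rangle.
\]
From the decoder's definition, $Z_i$ is either $\mathbf 0$ or $\pm Bd^{1/p} e_{j}$, so $\norm{Z_i}_2^2 \leq B^2 d^{2/p}$ almost surely. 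The cross terms collectively contribute $\frac{k-1}{k}\norm{Y}_2^2$. Hence
\[
\E{\norm{Q(Y)}_2^2 \mid Y} \leq \frac{B^2 d^{2/p}}{k} + \norm{Y}_2^2.
\]
Finally, since $q \in (1, 2]$, the monotonicity of $\ell_r$ norms in $r$ gives $\norm{Y}_2 \leq \norm{Y}_q \leq B$, so $\norm{Y}_2^2 \leq B^2$. Taking expectation over $Y$ and then the supremum over admissible $Y$ yields the desired bound $\alpha_0(Q; p) \leq \sqrt{B^2 d^{2/p}/k + B^2}$.

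The only mild subtlety I expect is making sure the norm-comparison inequalities are applied in the right direction for the full range $q \in (1, 2]$ (equivalently $p \in [2, \infty)$); the rest is direct.
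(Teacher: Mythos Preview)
Your proposal is correct and essentially matches the paper's proof: the bit count and unbiasedness are handled identically, and for the $\alpha_0$ bound the paper uses the equivalent bias--variance decomposition $\E{\norm{Q(Y)}_2^2}=\E{\norm{Q(Y)-Y}_2^2}+\E{\norm{Y}_2^2}$ followed by the $1/k$ variance reduction, arriving at the same inequality. Your explicit verification that $\norm{Y}_1\leq d^{1/p}\norm{Y}_q$ (so the inner $SimQ$ calls are valid) and that $\norm{Y}_2\leq\norm{Y}_q$ are steps the paper leaves implicit.
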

\begin{proof}
  We already saw how to represent the output of the $k$ copies of $SimQ$
using $k \log e + k\log(\frac{d}{k}+1)+k$ bits. 
For bounding $\alpha_0(Q;p)$, note from~\eqref{e:SIMQ+}
that $SimQ^+$ is an unbiased quantizer since $SimQ$ is unbiased.
Further, denoting by $Q_i(Y)$ the output $Q^i_{\tt SimQ}(Y;
Bd^{1/p})$, we get
\eq{
  \E{\norm{Q(Y)}_2^2}&=\E{\norm{Q(Y)-Y}_2^2}+\E{\norm{Y}_2^2}\\ &=\frac{1}{k^2} \sum_{i=1}^{k}\E{\E{\norm{Q_i(Y)-Y}_2^2|Y}}+\E{\norm{Y}_2^2}\\ &= \frac{\E{\norm{Q_1(Y)-Y}_2^2}}{k}+\E{\norm{Y}_2^2}
\\
  &\leq
  \frac{d^{2/p}B^2}{k}+B^2,
} where the first identity uses the fact
that $Q(Y)$ is an unbiased estimate of $Y$; the second uses the fact
that $Q_i(Y)-Y$ are zero-mean, independent random variables when conditioned on $Y$;
the third uses the fact that $Q_i(Y)-Y$ are identically distributed;
and the final inequality is by the performance of $SimQ$.
\end{proof}

The proof of upper bound for $p\in [2,\infty)$ in Theorem~\ref{t:main} is completed by setting $k=d^{2/p}$ and using Theorems~\ref{t:p2infty} and~\ref{t:e_uq}.

{
%\vspace{-.25cm}
\section{Our Quantizers for $p \in [1,2)$}\label{s:p-less-than-two}
%\vspace{-.25cm}
For p in $[1,2)$, the oracle yields unbiased subgradient estimates
  $Y$ such that $\norm{Y}_q \leq B$ almost surely. Our goal is to
  quantize such $Y$s in an unbiased manner and ensure that
  $\E{\norm{Q(Y)}_q^2}$ is $O(B^2)$. It can be seen that a simple unbiased
  uniform quantizer will achieve this using $d (\log d )^{1/2 -1/q}$.}
  However, our goal here is to get a result that is stronger than this
  baseline performance. To that end, we split the input vector $Y$ in
  two parts $Y_1$ and $Y_2$ with the first part having $\ell_\infty$
  norm less than $c$ and the second part having less than $d/\Delta_1$
  nonzero coordinates. We use an ``$\ell_\infty$ ball quantizer'' (a
  uniform quantizer) for $Y_1$ and an ``$\ell_2$ ball quantizer'' for
  $Y_2$.

Specifically, set $\displaystyle{c :=\frac{B
    \Delta_1^{1/q}}{d^{1/q}}},$ where $\Delta_1$ is that in Theorem
\ref{t:main}.  Then, define
\begin{align}\label{e:Y1}
Y_1 :=\sum _{i=1}^{d} Y(i)\indic{\{|Y(i)| \leq c\}} e_i,\text{~~} Y_2
:=\sum _{i=1}^{d} Y(i)\indic{\{|Y(i)| > c\}} e_i.
\end{align}
Clearly, $\norm{Y_1}_\infty\leq c$. Further, since $\norm{Y}_q \leq
B$, the number of nonzero coordinates in $Y_2$ can be at the most
$B^q/c^q=d/\Delta_1$. For quantizing $Y_1$, we use the coordinate-wise
uniform quantizer (CUQ) described below.

\paragraph*{\textbf{CUQ}}
We note that $CUQ$ is an unbiased, randomized uniform quantizer which
has appeared in multiple works recently
($cf.$~\cite{suresh2017distributed},~\cite{mayekarratq}). We follow
the treatment in \cite{mayekarratq}.  CUQ has two parameters: $M$
which describes the complete dynamic range $[-M, M]$ of the quantizer;
$k$ which describes the number of levels to which $[-M, M]$ is
quantized to uniformly.  In order to quantize $Y_1$ in \eqref{e:Y1},
we set
\begin{align}\label{e:CUQparam}
M = c, \quad \log(k+1)=\ceil{\log(2\sqrt{2}\Delta_1^{1/q}+2)}.
\end{align} 

\begin{lem}\label{l1}
Let $Q_{\tt u}$ be the quantizer CUQ with parameters $M$ and $k$ set
as in \eqref{e:CUQparam}. Then, for $Y$ such that $\norm{Y}_q \leq
B~\text{a.s.}$ and $Y_1$ as that in \eqref{e:Y1}, $Q_{\tt u}(Y_1)$ can
be represented in $ d \ceil{\log(2\sqrt{2}\Delta_1^{1/q}+2)} $ bits,
$\E{Q_{\tt u}(Y_1)\mid|Y}=Y_1$, and $\E{\norm{Q_{\tt u}(Y_1)}_q^2}\leq
3B^2$.
\end{lem}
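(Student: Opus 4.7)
The plan is to verify the three claims in order, with the first two being bookkeeping and the third being the substantive estimate. For the bit count, I would note that $Q_{\tt u}$ quantizes each of the $d$ coordinates of $Y_1$ independently to one of $k+1$ grid points in $[-c,c]$, giving a total of $d\log(k+1) = d\ceil{\log(2\sqrt{2}\Delta_1^{1/q}+2)}$ bits. For unbiasedness, by the definition of $Y_1$ in \eqref{e:Y1} we have $\norm{Y_1}_\infty \leq c = M$, so CUQ's coordinate-wise randomized rounding is well-defined and unbiased on $Y_1$; since $Y_1$ is a deterministic function of $Y$, the identity $\E{Q_{\tt u}(Y_1)|Y} = Y_1$ follows immediately.

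For the third and main claim, my approach is to split $Q_{\tt u}(Y_1) = Y_1 + e$ with $e := Q_{\tt u}(Y_1) - Y_1$ and apply the triangle inequality in $\ell_q$ together with the elementary bound $(a+b)^2 \leq 2a^2 + 2b^2$:
\begin{align*}
\norm{Q_{\tt u}(Y_1)}_q^2 \leq 2\norm{Y_1}_q^2 + 2\norm{e}_q^2 \leq 2B^2 + 2\norm{e}_q^2,
\end{align*}
where we used $\norm{Y_1}_q \leq \norm{Y}_q \leq B$ almost surely. It therefore suffices to establish $\E{\norm{e}_q^2} \leq B^2/2$.

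To bound the expected error, I would exploit two features of CUQ's error: the coordinate-wise deterministic bound $|e(i)| \leq 2c/k$ (the grid spacing), which yields $\E{\norm{e}_q^q} \leq d(2c/k)^q$; and the fact that $p \in [1,2)$ forces $q \geq 2$, so $x \mapsto x^{2/q}$ is concave on $[0,\infty)$ and Jensen's inequality gives $\E{\norm{e}_q^2} = \E{(\norm{e}_q^q)^{2/q}} \leq (\E{\norm{e}_q^q})^{2/q} \leq 4c^2 d^{2/q}/k^2$. Substituting $c = B\Delta_1^{1/q}/d^{1/q}$ and the parameter choice in \eqref{e:CUQparam} (which gives $k \geq 2\sqrt{2}\Delta_1^{1/q}$ and hence $k^2 \geq 8\Delta_1^{2/q}$) cancels the $d^{2/q}$ and $\Delta_1^{2/q}$ factors exactly and leaves $B^2/2$, as required. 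The edge case $q = \infty$ is immediate since $\norm{Q_{\tt u}(Y_1)}_\infty \leq c = B$ almost surely.

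The main obstacle I anticipate is precisely the passage from a coordinate-wise almost-sure error bound to the $\ell_q^2$ expectation: the naive inequality $\norm{e}_q \leq \norm{e}_2$ (valid for $q \geq 2$) is too lossy by a factor of $d^{1-2/q}$ and fails to deliver a constant bound. One must instead bypass the $\ell_2$ detour and apply Jensen directly to the $\ell_q^q$-moment, which is made possible by the concavity of $x^{2/q}$ for $q \geq 2$. The parameters $c$ and $k$ in \eqref{e:CUQparam} are then calibrated precisely to absorb the remaining dimension- and $\Delta_1$-factors.
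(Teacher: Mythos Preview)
Your proposal is correct and follows essentially the same route as the paper: both split via $\norm{Q_{\tt u}(Y_1)}_q^2 \leq 2\norm{Y_1}_q^2 + 2\norm{e}_q^2$, then bound the error term by applying Jensen's inequality with the concave map $x\mapsto x^{2/q}$ to the coordinate-wise almost-sure bound on $|e(i)|$, and finally substitute the calibrated values of $c$ and $k$. The only cosmetic difference is that the paper records the per-coordinate error as $2M/(k-1)$ rather than your $2c/k$ (a convention on how the $k$ levels are counted); your intermediate bound $\E{\norm{e}_q^2}\leq B^2/2$ is in fact the sharp version needed to land at $3B^2$, and your explicit handling of the $q=\infty$ edge case is a nice addition the paper omits.
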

\begin{proof}
 CUQ requires a precision of $d\log(k+1)$, which coincides with the
 statement above for our choice of $k$.  To see unbiasedness, note
 that CUQ is an unbiased quantizer as long as all the coordinates of
 the input do not exceed $M$. Since we have set $M=c$ and
 $\norm{Y_1}_\infty=c$, this property holds.  Finally, to show that
 $\E{\norm{Q_{\tt u}(Y_1)}_q^2}\leq 3B^2$, note that $\E{\norm{Q_{\tt
       u}(Y_1)}_q^2} \leq 2\E{\norm{Q_{\tt u}(Y_1)
     -Y_1}_q^2}+2\E{\norm{Y_1}_q^2}.$ Also,
\vspace{-0.5cm} \eq{ \E{\norm{Q_{\tt u}(Y_1) -Y_1}_q^2} &\leq
  \E{\sum_{i \in [d]}|Q_{\tt u}(Y_1(i)) -Y_1(i)|^q}^{2/q}\\&\leq B^2,}
where the first inequality uses Jensen's inequality as $(\cdot
)^{2/q}$ is a concave function and second follows from fact that for
$M$ set as in \eqref{e:CUQparam} we have that $|Q_{\tt
  u}(Y_1)(i)-Y_1(i)| \leq \frac{2M}{(k-1)}$ a.s., $\forall i \in [d]$,
by the description of CUQ.
\end{proof}
In order to quantize $Y_2$, we indicate the coordinates with non-zero
entries. This takes less than $d$ bits.  Then, we quantize the
restriction $Y_2^\prime$ of $Y_2$ to these nonzero entries. Recall
that the dimension of $Y_2^{\prime}$ is less than $d^{\prime}
:={d}/{\Delta_1}$. Also, the $\ell_2$ norm of $Y_2^{\prime}$ is less
than $\norm{Y}_2\leq\norm{Y}_q d^{1/2-1/q} \leq B d^{1/2-1/q} =:
B^{\prime}$.

We need a quantizer $Q$ such that $\E{\norm{Q(Y_2^{\prime})}_q^2}$ is
$O(B^2)$.  As seen in the proof of Lemma~\ref{l1}, one way to do this
is to ensure $\E{\norm{Q(Y_2^{\prime})-Y^{\prime}}_q^2}$ is $O(B^2), $
which, in turn, can be ensured if
$\E{\norm{Q(Y_2^{\prime})-Y_2^{\prime}}_2^2}$ is $O(B^2)$.  To achieve
this, we can use an unbiased quantizer for the unit $\ell_2$ ball in
$\R^d$, which can quantize it to an MSE of $O( 1/d^{1-2/q})$ using
$O(d \log (d^{1/2-1/q})$ bits.
%% would be good
%% enough for us. As this implies to quantize $Y_2^{\prime}$, a
%% $d^{\prime}$ dimensional vector, upto $O(B^2)$ MSE, we need at the
%% most $d^{\prime} \log d^{1/2-1/q} \leq d$ bits. 
We note that $SimQ^+$, while optimal for the stochastic optimization
use-case, does not yield the required scaling of bits in MSE.  A
candidate quantizer is RATQ of \cite{mayekarratq}, which is, in fact,
close to information theoretically optimal.

\newest{We note that RATQ is a quantizer used to quantize random
  vectors in $\R^d$ which have Euclidean norm almost surely bounded by
  $B$. The optimal parameters for RATQ are set in terms of $B$ and
  $d$. Since $Y_2^{\prime}$ has the Euclidean norm  almost surely bounded by
  ${B^\prime}=B d^{1/2-1/q}$ and its dimension is
  ${d^\prime}={d}/{\Delta_1}$, we can 
  set the parameters of RATQ in terms of
 $B^{\prime}$ and $d^{\prime}$. 
 We set 
\begin{align}\label{e:param_RATQ}
&\nonumber
m=\frac{3{B^{\prime}}^2}{d^{\prime}}, \quad m_0=\frac{2{B^{\prime}}^2 }{d^{\prime}} \cdot  \ln s, \quad \log h=\ceil{\log(1+\ln^\ast(d^{\prime}/3))},
\\
& s=\log h, \quad \log(k+1) = \Delta_1.
\end{align}}
%\newer{quantizer}. 
%The result below shows that RATQ requires $O(d)$ bits to represent $Y_2^{\prime}$ in an
%unbiased manner, and  the quantized output has a small expected $\ell_q$
%norm square. 
% 

%RATQ first preprocesses the input   vector with a invertible random matrix $R$, which is generated via shared randomness.  Then, for each coordinate of the rotated vector we choose the smallest  dynamic range which contains it and quantize that coordinate using CUQ for that range using $k$ uniform levels.
%  RATQ has $h$ dynamic ranges $\{[-M_i, M_i]: i \in [h]\}$ which are described in terms of  parameters $m$ and $m_0$; specifically, $M_i = m\cdot e^{*i}+m_0$,  where $e^{*i}$ is the $i^{th}$ tetration of $e$. We set
%\begin{align}\label{e:RATQparam}
%\nonumber
%m=\frac{3{B^\prime}^2}{{d^\prime}}, \quad m_0=0, \quad &\log h=\ceil{\log(1+\ln^\ast(d^\prime/3))},\\ & \hspace{-.9cm}\log(k+1) \leq \ceil{\log (d^{1/2 -1/q}+2)},
%\end{align}
%where $B^\prime =B d^{1/2 -1/q} $ is the bound on the Euclidean norm of the input vector $Y_2$, $d^\prime =d/\log d$. 

\begin{lem}\label{l2}
Let $Q_{{\tt at}, R}$ be the quantizer RATQ with parameters set as \eqref{e:param_RATQ}.
Then, for $Y$ such that $\norm{Y}_q \leq B ~{a.s.}$  and $Y_2^{\prime}$ the restriction of $Y_2$ in \eqref{e:Y1},    $Q_{{\tt at}, R}(Y_2^{\prime})$ can be represented in $2d+\Delta_2$ bits, $\E{Q_{{\tt at}, R}(Y_2^{\prime})\mid|Y}=Y_2^{\prime}$, and $\E{\norm{Q_{{\tt at}, R}(Y_2^{\prime})}_q^2}\leq 3B^2.$
\end{lem}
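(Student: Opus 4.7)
The plan is to mirror the proof of Lemma \ref{l1}, replacing CUQ by RATQ and invoking the RATQ analysis from \cite{mayekarratq} for the quantitative estimates. Three items must be verified: (i) the bit count $2d+\Delta_2$, (ii) conditional unbiasedness of $Q_{{\tt at},R}(Y_2')$, and (iii) the $\ell_q$ bound $\mathbb{E}[\|Q_{{\tt at},R}(Y_2')\|_q^2]\le 3B^2$.

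For (i), I would split the description of the quantized output into a length-$d$ binary indicator of the non-zero support of $Y_2$, plus the RATQ codeword for the restriction $Y_2'$, which has effective dimension at most $d'=d/\Delta_1$. By the RATQ encoder of \cite{mayekarratq}, this codeword consists of $d'\log(k+1)$ bits for the quantized coordinates plus $(d'/s)\log h$ bits of side information for the adaptive dynamic ranges. Plugging in $\log(k+1)=\Delta_1$ collapses the main term to exactly $d$, and the choice $s=\log h=\lceil\log(1+\ln^*(d'/3))\rceil$ is tuned so that the side-information term is at most $\Delta_2$, giving $2d+\Delta_2$ in total.

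For (ii), I would first check $\|Y_2'\|_2\le\|Y\|_2\le d^{1/2-1/q}\|Y\|_q\le B'$ almost surely, using monotonicity of $\ell_p$ norms for $q\ge 2$. This places $Y_2'$ inside RATQ's input class, so the unbiasedness property of RATQ from \cite{mayekarratq} gives $\mathbb{E}[Q_{{\tt at},R}(Y_2')\mid Y]=Y_2'$ directly.

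For (iii), I would use the same splitting as in Lemma \ref{l1}:
\[
\mathbb{E}[\|Q_{{\tt at},R}(Y_2')\|_q^2]\le 2\,\mathbb{E}[\|Q_{{\tt at},R}(Y_2')-Y_2'\|_q^2]+2\,\mathbb{E}[\|Y_2'\|_q^2].
\]
The second term is at most $2B^2$ since $\|Y_2'\|_q\le\|Y\|_q\le B$. For the first, I would dominate $\|\cdot\|_q\le\|\cdot\|_2$ (valid since $q\ge 2$) to reduce it to the Euclidean MSE of RATQ on inputs of Euclidean norm $\le B'$. The parameter $\log(k+1)=\Delta_1\sim\log d^{1/2-1/q}$ is tuned precisely so that $k\gtrsim B'/B$, making RATQ's MSE scale as $B'^2/k^2=O(B^2)$; the other parameters $m,m_0,h,s$ handle the adaptive-range overhead that would otherwise inflate this by a $\ln s$ factor. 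Matching constants then close the inequality at $3B^2$. The principal obstacle is bookkeeping rather than conceptual: one has to verify that the exact values in \eqref{e:param_RATQ} line up with RATQ's quantitative MSE and bit-count statements in \cite{mayekarratq} so that the Euclidean MSE actually collapses to $B^2$ and the side information to $\Delta_2$. Once those are in hand the argument is a verbatim parallel of Lemma \ref{l1}.
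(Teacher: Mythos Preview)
Your approach for (ii) and (iii) is essentially identical to the paper's: the same triangle-inequality splitting, the $\|\cdot\|_q\le\|\cdot\|_2$ reduction for $q\ge 2$, and an appeal to the RATQ mean-square error bound (Lemma~6.3 in \cite{mayekarratq}), which gives $\mathbb{E}\bigl[\|Q_{{\tt at},R}(Y_2')-Y_2'\|_2^2\bigr]\le B^2 d^{1-2/q}\,(9+3\ln s)/(k-1)^2$; substituting the chosen $k$ makes this at most $B^2/2$.

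Your bit count in (i), however, is miscounted. With $s=\log h$, the side-information term $(d'/s)\log h$ equals $d'=d/\Delta_1$, which is \emph{not} bounded by $\Delta_2$ (it is typically of order $d/\log d$, whereas $\Delta_2$ is doubly logarithmic). In the paper the $2d+\Delta_2$ bits cover only the RATQ codeword, not the support indicator: one has $\lceil d'/s\rceil\log h + d'\log(k+1)\le (d/\Delta_1+\log h)+d\le 2d+\Delta_2$, so the side information itself can contribute up to $d+\Delta_2$, not $\Delta_2$. The $d$-bit support indicator is added \emph{separately} when the bit counts of Lemmas~\ref{l1} and~\ref{l2} are combined, which is why the upper bound in Theorem~\ref{t:main} for $p\in[1,2)$ carries a ``$+3$'' rather than the ``$+2$'' your accounting would produce.
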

\begin{proof}

First, we note that the output of RATQ can be represented in $\ceil{d^{\prime}/s}(\log h) +d \log (k+1)$ bits, which, in this case, is less than
\eq{
\frac{d}{\Delta_1 \log h} \cdot \left(\log h \right)+ \log h+\left(\frac{d}{\Delta_1 } \log(k+1)\right) \leq 2d+\Delta_2.
}
For unbiasedness, note that for our choice of $m, m_0, h$, RATQ is always an unbiased quantizer of the input.
\newest{Finally, for showing $\E{\norm{Q_{{\tt at}, R}(Y_2^{\prime})}_q^2}\leq 3B^2$, 
we note that
\eq{\E{\norm{Q_{{\tt at}, R}(Y_2^{\prime})}_q^2}
  &\leq 2\E{\norm{Q_{{\tt at}, R}(Y_2^{\prime})-Y_2^{\prime}}_q^2}
  +2\E{\norm{Y_2^{\prime}}_q^2}
\\ 
  &\leq 2\E{\norm{Q_{{\tt at}, R}(Y_2^{\prime})-Y_2^{\prime}}_q^2}
  +2B^2
\\ 
  &\leq 2\E{\norm{Q_{{\tt at}, R}(Y_2^{\prime})-Y_2^{\prime}}_2^2}
  +2B^2.
}
The proof will be complete upon showing that $\E{\norm{Q_{{\tt at},
      R}(Y_2^{\prime})-Y_2^{\prime}}_2^2} \leq B^2/2$, towards
which we apply~\cite[Lemma 6.3]{mayekarratq} to get
\[
\E{\norm{Q_{{\tt at}, R}(Y_2^{\prime})-Y_2^{\prime}}_2^2}
\leq B^2 d^{1-2/q} \cdot  \frac{9+3\ln s}{(k-1)^2},
\]
and substituting our choice of $k$.}
\end{proof}

The overall quantizer $Q$ of input vector $Y$ is the sum of quantized
outputs of $Y_1$ and $Y_2$. By Lemmas \ref{l1} and \ref{l2}, the
quantized output of $Y$ can be represented in $d\left(
\ceil{\log(2\sqrt{2} {\Delta_1}^{1/q}+2)}+3\right) +\Delta_2$ bits\footnote{This
accounts for the communication needed to send the nonzero indices of
$Y_2$, too.}. Furthermore, $\alpha_0(Q;p) \leq \sqrt{12}B$. These facts
along with Theorem \ref{t:e_uq} prove the upper bound in
Theorem~\ref{t:main} for $p\in [1,2)$.

\section{Proof of lower bounds in Theorem \ref{t:main}}\label{ss:lb-proof} 
%\subsection{Proof-sketch for lower bounds}\label{ss:lb-proof} 
\newest{We derive the lower bounds on the maxmin error $ \mathcal{E}_0^*{(T, r,p)}$ defined in \eqref{e:e_r}. The lower bounds of Theorem~\ref{t:main} will follow by upper-bounding 
  $\mathcal{E}_0^*{(T, r,p)}$ by $\mathcal{U} (T,p)$.}
These information theoretic lower bounds generalize the
ones derived for the Euclidean case in \cite{mayekarratq}.

In our proof below, we
 use the reduction of convex optimization to mean
 estimation given in~\cite{agarwal2012information}.
 We use two different types of oracle constructions. The first one is based on
   Bernoulli product distribution and yields the 
lower bound using a strong data processing inequality
 from \cite{duchi2014optimality}. 
The second one is  Paninski's construction \cite{paninski2008coincidence} and uses a 
 strong data processing 
   inequality type bound ($cf.$ the chi-square contraction bound in~\cite{ACT:18}). 
Interestingly, for $p > 2 $ we
   need  to use both the oracle constructions, whereas for $p \in [1,
     2)$ Bernoulli product construction is sufficient. Heuristically,
     for $p > 2$,  Paninski's construction captures the difficulty for
     optimization but does not pose much additional difficulty 
for quantized oracles. On the other hand, the Bernoulli product
     construction is not the bottleneck for optimization but poses
a significant challenge  if the oracle is quantized. %\new{The complete
   %  proofs are relegated to the extended version.}

\begin{thm}\label{t:lb1}
For $p \in [2, \infty]$, we have
\[
 \mathcal{E}_0^*{(T, r,p)}
\geq 
\left( \frac{c_0DB\cdot d^{1/2-1/p}}{\sqrt{T}} \cdot \sqrt{\frac{d}{d \wedge 2^r}}\right) \vee \left(\frac{c_0DB}{\sqrt{T}} \cdot \sqrt{\frac{d}{d \wedge r}} \right)
 .\]
\end{thm}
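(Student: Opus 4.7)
The plan is to apply the standard reduction of convex optimization to multiple hypothesis testing from \cite{agarwal2012information}: I construct a finite family of functions $\{f_v\}_{v \in V} \subset \oO_{0,p}$ on a common $\ell_p$-ball domain $\X$ such that the minimizers of distinct $f_v$ are pairwise separated, so any algorithm achieving small optimization error yields a test for $v$ with small error. A lower bound on the testing error, derived by controlling the divergence between the oracle-induced distributions after they are passed through $Q$, then gives the desired lower bound on $\mathcal{E}_0^*(T,r,p)$. Two distinct oracle constructions are needed, each producing one of the two terms in the stated maximum.

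For the first term $c_0DB d^{1/2-1/p}\sqrt{d/(d\wedge 2^r)}/\sqrt{T}$, I would use Paninski's construction: index set $V=[d]$, functions $f_v(x)\propto -\langle e_v, x\rangle$ on the $\ell_p$-ball of radius $D/2$, and an oracle that almost surely returns $B\cdot s\cdot e_I$ where $I$ is nearly uniform on $[d]$ with a small $\epsilon$-bias toward $v$ and $s$ is an appropriate sign. This oracle automatically satisfies $\norm{\hat g(x)}_q\leq B$ almost surely and its law encodes $v$ only through a tiny perturbation of a uniform distribution over $\{\pm Be_i\}$. Passing the output through $Q$ produces distributions supported on at most $2^r$ points, and the chi-square contraction bound from \cite{ACT:18} then shows that the average pairwise chi-square divergence between the $Q$-processed laws is $O(\epsilon^2\cdot 2^r/d)$. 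Aggregating over $T$ rounds via the usual chain rule and requiring this to stay below a constant forces $\epsilon\asymp\sqrt{d/(T(d\wedge 2^r))}$, which when translated back through the reduction yields the first term.

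For the second term $c_0DB\sqrt{d/(d\wedge r)}/\sqrt{T}$, I would use the Bernoulli product construction: $V=\{-1,+1\}^d$ and $f_v(x)\propto -\sum_i v_i x_i$, with an oracle returning a random vector with independent coordinates, each having a small $\delta$-bias in sign $v_i$, scaled so that $\norm{\hat g(x)}_q \leq B$ almost surely. I would apply Assouad's lemma to decompose the testing error into $d$ coordinate-wise binary tests, and control the per-round information between any two laws $P_v, P_{v'}$ after quantization using the strong data processing inequality of \cite{duchi2014optimality}; this yields a per-round KL divergence of $O(\delta^2\cdot(d\wedge r))$ under the $r$-bit channel, because only $r$ bits of the $d$-dimensional Bernoulli product can be conveyed per round. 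Aggregating over $T$ rounds and inverting gives $\delta\asymp\sqrt{1/(T(d\wedge r))}$ per coordinate, which translates into an $\ell_p$ optimization gap producing the second term.

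The main obstacle will be producing oracle distributions that respect the almost-sure $\ell_q$-norm bound with the correct $p$-dependent constants, while simultaneously encoding the hypothesis index in a way that makes both the optimization-gap reduction and the information-theoretic contraction tight. A secondary subtlety is that oracle outputs in different rounds are adaptively generated, so the per-round chi-square/KL bounds must be chained across the $T$ adaptive queries. This chaining follows the templates in \cite{agarwal2012information,mayekarratq}, but requires careful bookkeeping to ensure that the quantizer channel $Q$ is applied uniformly at each round and that the strong DPI and chi-square contraction bounds compose correctly along the $T$-round interaction. The genuinely new ingredient over \cite{mayekarratq} is the combination of both constructions: Paninski-style to capture the $d^{1/2-1/p}$ scaling of classical $\ell_p$-optimization and Bernoulli-product to capture the quantization-driven $\sqrt{d/r}$ blow-up, each dominating in a different regime of $(p,r)$.
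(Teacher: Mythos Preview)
Your approach to the second term is essentially right and close to the paper's (the paper uses Fano with a $d/4$-Hamming packing of $\{-1,1\}^d$ rather than Assouad, but the effect is the same).

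The first-term construction, however, has a genuine gap. With hypothesis set $V=[d]$ and $f_v\propto\langle e_v,x\rangle$ on the $\ell_p$-ball, you cannot recover the factor $d^{1/2-1/p}$ for $p>2$. The optimization gap in your construction scales like $B\epsilon D$ (or worse), and since $|V|=d$, Fano's inequality caps the information budget at $\log d$ rather than $d$. Regardless of how tight the chi-square contraction is after the $r$-bit channel, this limits the achievable lower bound to roughly $DB\sqrt{\log d}/\sqrt{T}\cdot\sqrt{d/(d\wedge 2^r)}$, which misses the target by a factor of order $d^{1/2-1/p}/\sqrt{\log d}$---polynomial in $d$ for any fixed $p>2$ (and already a $\sqrt{d/\log d}$ shortfall at $p=\infty$). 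In particular, your claimed $\epsilon\asymp\sqrt{d/(T(d\wedge 2^r))}$ does not follow from a $[d]$-indexed family: with only $d$ hypotheses, the information budget is $\log d$, not $d$.

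The paper's remedy is to use the \emph{same} exponentially large hypothesis class $\V\subset\{-1,1\}^d$ (a $d/4$-Hamming packing with $|\V|=2^{\Theta(d)}$) for \emph{both} lower bounds, and to let the \emph{oracle} carry the distinction. For the first term the oracle outputs a single signed coordinate $\pm Be_I$ with $I$ drawn from a $\delta$-perturbation of the uniform distribution on $[d]$; for the second term the oracle outputs an independent $\pm B/d^{1/q}$ in every coordinate. The sparse one-hot oracle is what makes the chi-square contraction bound of~\cite{ACT:18} yield $I(V;Q(Y))\lesssim\delta^2\min\{2^r,d\}/d$; combining this with Fano over $2^{\Theta(d)}$ hypotheses gives $\delta\asymp d/\sqrt{T\min\{2^r,d\}}$, and the function gap $\asymp DB\delta/d^{1/p}$ on the domain $\{x:\norm{x}_\infty\le D/(2d^{1/p})\}$ then produces exactly the first term. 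The point you missed is that ``Paninski's construction'' here refers to the \emph{oracle distribution}, not to a $[d]$-indexed hypothesis family; the hypothesis class must stay exponential in $d$ to get the $d^{1/2-1/p}$ scaling.
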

\begin{proof}
~

\paragraph{First Lower Bound}
 For simplicity, we assume $\X=\{x:\norm{x}_\infty\leq D/(2d^{1/p}) \}$.
Let $\V\subset\{-1,1\}^d$ be the maximal $d/4$-packing in Hamming distance, namely it is a collection of vectors such that
any two vectors $\alpha, \alpha^\prime\in V$, $d_H(\alpha, \alpha^\prime)\geq d/4$. As is well-known, there exists such a packing of cardinality $2^{c_2 d}$, where $c_2$ is a constant.
Consider convex functions $f_\alpha$, $\alpha\in \V$, with domain $\X$ and satisfying assumptions \eqref{e:asmp_unbiasedness} and
\eqref{e:asmp_as_bound} given below:
 \[
f_{\alpha}(x):=\frac {2B\delta}{d} \sum_{i=1}^{d} \alpha(i) x(i).
\]
Note that the gradient of $f_\alpha(x)$ is given by $2B\delta\alpha/d$ for each $x\in \X$. 
For each $f_\alpha$, consider the corresponding gradient oracles $O_\alpha$ which outputs 
$e_i \cdot B$ and $-e_i \cdot B$ with probabilities $(1+2\delta\alpha(i))/2d$ and $(1-2\delta\alpha(i))/2d$, respectively. We denote the distribution of output of oracle $O_\alpha$ by $P_\alpha$.

Let $V$ be distributed uniformly over $\V$. Consider the multiple hypothesis testing problem of determining $V$ by observing
samples from $\Qenc(Y)$ with $Y$ distributed as $P_V$. Consider an optimization algorithm that outputs $x_T$ after $T$ iterations. Then, we have 

\begin{align*}
\E{f_\alpha(x_T)-f_\alpha(x^*)}&\geq \frac{DB\delta}{4d^{1/p}} P\left(f_\alpha(x_T)-f_\alpha(x^*)\geq\frac{DB\delta}{4d^{1/p}}\right)
\\
&{=} \frac{DB\delta}{4 d^{1/p}} P\left( \frac {B\delta}{d} \alpha^T(x_T-x^*) \geq\frac{DB\delta}{8d^{1/p}} \right)
\\
&{=} \frac{DB\delta}{4d^{1/p}} P\left( \frac {B\delta}{d} \norm{x_T-x^*}_{1} \geq\frac{DB\delta}{8d^{1/p}} \right)
\\
&{=}  \frac{DB\delta}{4d^{1/p}} P\left(\norm{(2d^{1/p}/D)x_T + \alpha}_1\geq \frac {d}{4}\right),
\end{align*}
where the second identity holds since
  $sign(\alpha(i))=sign(x_T-x^*)$ and the final identity is obtained by noting that the optimal value $x^*$ for $f_\alpha$ is $-(D/2d^{1/p})\alpha$.  Note that all $\alpha,
\alpha^\prime \in\V$ satisfy $\norm{\alpha-\alpha^\prime}_1\geq d/2
$. Consider the following test for the aforementioned hypothesis
testing problem. We execute the optimization protocol using oracle
$O_V$ and declare the unique $\alpha\in V$ such that
$\norm{(2d^{1/p}/D)x_T + \alpha}_1 < d/4$. The probability of error
for this test is bounded above by $P\left(\norm{(2d^{1/p}/D)x_T +
  \alpha}_1\geq \frac{d}{4}\right)$, whereby the previous bound and
Fano's inequality give
\[
\E{f_\alpha(x_T)-f_\alpha(x^*)}\geq \frac{DB\delta}{4d^{1/p}} \left(1-
\frac{TI(V; Q(Y))+1}{\log |\V|}\right).
\]
For a quantizer $Q$ with precision $r$, using, for instance, 
the chi-square contraction bound from~\cite{ACT:18}, we have $I(V; Q(Y))\leq 
\delta^2\min\{2^r, d\}/d$. Therefore,
\[
\max_{\alpha}\ep_0(f, \pi^{QO})\geq \frac{{DB}\delta}{4d^{1/p}} \bigg(1 - \frac{T\delta^2(\min\{2^r, d\}/d)}{c_2d}\bigg).
\]
The proof is completed by maximizing the right-side over~$\delta$.

\paragraph{Second Lower Bound}
We consider a slightly different family of convex functions still parametrized by $\alpha \in V$, with everything else remaining the same as the first bound.
Consider convex functions $f_\alpha$, $\alpha\in \V$, with domain $\X$ and satisfying assumptions \eqref{e:asmp_unbiasedness} and
\eqref{e:asmp_as_bound} given below:
 \[
f_{\alpha}(x):=\frac {2B\delta}{d^{1/q}} \sum_{i=1}^{d} \alpha(i) x(i).
\]
Note that the gradient of $f_\alpha(x)$ is given by $2B\delta\alpha/d^{1/q}$ for each $x\in \X$. 
For each $f_\alpha$, consider the corresponding gradient oracles $O_\alpha$ which outputs 
independent values for each coordinate,
with the value of $i$th coordinate taking values $B/d^{1/q}$ and $-B/d^{1/q}$ with probabilities $(1+2\delta\alpha(i))/2$ and $(1-2\delta\alpha(i))/2$, respectively. We denote the distribution of output of oracle $O_\alpha$ by $P_\alpha$.

By similar argument as in the first bound, we have
\newest{
\begin{align*}
\E{f_\alpha(x_T)-f_\alpha(x^*)}&\geq \frac{DB\delta}{4} P\left(f_\alpha(x_T)-f_\alpha(x^*)\geq\frac{DB\delta}{4}\right)
\\
&{=} \frac{DB\delta}{4} P\left( \frac {B\delta}{d^{1/q}} \alpha^T(x_T-x^*) \geq\frac{DB\delta}{8} \right)
\\
&{=} \frac{DB\delta}{4} P\left( \frac {B\delta}{d^{1/q}} \norm{x_T-x^*}_{1} \geq\frac{DB\delta}{8} \right)
\\
&{=}  \frac{DB\delta}{4} P\left(\norm{(2d^{1/p}/D)x_T + \alpha}_1\geq \frac {d}{4}\right),
\\&{=}\frac{DB\delta}{4} 
 \left(1-
\frac{TI(V; Q(Y))+1}{\log |\V|}\right)
\end{align*}
}
For a quantizer $Q$ with precision $r$, using the strong data
processing inequality bound from~\cite[Proposition
  2]{duchi2014optimality}, we have $I(V; Q(Y))\leq 360
\delta^2\min\{r, d\}$. Therefore,
\[
\max_{\alpha}\ep_0(f, \pi^{QO})\geq \frac{{DB}\delta}{4} \bigg(1-
\frac 1{c_2d} - \frac{360T\delta^2\min\{r, d\}}{c_2d}\bigg).
\]
The proof is completed by maximizing the right-side over $\delta$.

\end{proof}

\newest{
  Next, noting that the second lower bound in Theorem \ref{t:lb1} also holds for $p \in [1, 2)$, we obtain the following lower bound.
\begin{thm}
For $p \in [1, 2)$, we have
\[
 \mathcal{E}_0^*{(T, r,p)}
\geq 
\frac{c_0DB}{\sqrt{T}} \cdot \sqrt{\frac{d}{d \wedge r}} 
 .\]
\end{thm}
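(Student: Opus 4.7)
The plan is to observe that the second construction used in the proof of Theorem~\ref{t:lb1} already carries over verbatim to the range $p\in[1,2)$, so the same lower bound applies here. The only substantive task is to re-verify that the domain, the function family, and the oracles all meet the definitions of $\mathbb{X}$ and $\oO_{0,p}$ for $q=p/(p-1)\in(2,\infty]$.

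I would take the same $\X=\{x:\norm{x}_\infty\leq D/(2d^{1/p})\}$, the same packing $\V\subset\{-1,1\}^d$, and the same linear family $f_\alpha(x)=(2B\delta/d^{1/q})\sum_i\alpha(i)x(i)$ together with the product-Bernoulli oracle $O_\alpha$ used in Theorem~\ref{t:lb1}. The $\ell_p$-diameter bound $\sup_{x,y\in\X}\norm{x-y}_p\leq 2\cdot(D/(2d^{1/p}))\cdot d^{1/p}=D$ puts $\X\in\mathbb{X}$. For the oracle, each coordinate of $\hat g(x)$ takes the value $\pm B/d^{1/q}$, so for finite $q$ one has $\norm{\hat g(x)}_q^q=d\cdot(B/d^{1/q})^q=B^q$, and in the degenerate case $q=\infty$ one has $\norm{\hat g(x)}_\infty=B$; in either case assumption~\eqref{e:asmp_as_bound} holds, and unbiasedness is immediate from the symmetric form of the coordinate-wise biases.

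With these ingredients, the reduction of stochastic optimization to multiple hypothesis testing and the Fano step proceed verbatim as in the second half of the proof of Theorem~\ref{t:lb1}; indeed, the strong data processing inequality $I(V;Q(Y))\leq 360\,\delta^2\min\{r,d\}$ from \cite[Proposition 2]{duchi2014optimality} is $p$-agnostic and transfers without modification. Optimizing the resulting bound on $\max_\alpha\ep_0(f,\pi^{QO})$ over $\delta$ then yields the claimed $\mathcal{E}_0^*(T,r,p)\geq c_0 DB/\sqrt{T}\cdot\sqrt{d/(d\wedge r)}$. There is no genuine technical obstacle here; the conceptual content is the observation that the hardness captured by this particular construction is informational rather than geometric in $p$, and hence persists across all $\ell_p$ geometries, including $p\in[1,2)$ where it becomes the only available bound since the first ($d^{1/2-1/p}$-based) bound of Theorem~\ref{t:lb1} is no longer meaningful.
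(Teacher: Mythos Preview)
Your proposal is correct and follows exactly the paper's approach: the paper simply remarks that the second lower bound in Theorem~\ref{t:lb1} also holds for $p\in[1,2)$ and states the theorem without further argument. Your added verification that $\X\in\mathbb{X}$ and that the product-Bernoulli oracle satisfies~\eqref{e:asmp_as_bound} for $q\in(2,\infty]$ is a welcome elaboration of what the paper leaves implicit.
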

 }

\section{Comments on general tradeoff and mean square bounded oracles}
We close with the remark that an almost complete characterization of
$\mathcal{E}_0^*{(T,r, p)}$, for any $r, p$ can be obtained using our
quantizers and the ideas developed in this paper. In fact, our lower
bounds on $r^*(T, p)$ are derived via lower bounds  on
$\mathcal{E}_0^*{(T,r, p)}$ which hold for all $r,p$ (these can be found in the
extended version). For upper bounds when $p \in [1, 2)$,
  note that the parameter $k$ of $SimQ^+$ gives us a nice lever to operate
  under any precision constraint $r \geq \log d$. It turns out that
  such a quantizer leads to upper bounds which are off by at the most
by a  $\sqrt{\log d}$ factor. For upper bounds in the case of $p \in [1,
    2)$, note that classical sampling techniques such as uniform
    sampling without replacement maybe used to sample a subset of
    coordinates and quantize them using the quantizers described
    here. Even in this case the upper bound and lower bounds are off
    by a nominal factor of $\sqrt{ \log d \cdot \log \log
      d/{q}}$.
    However, we believe that removing these
      remaining factors can lead to new quantizers, and is of research
      interest.

For mean square bounded oracles mentioned in
        Remark~\ref{r:ms-oracles}, the bias in the quantized oracle
        output is nearly inevitable. In our previous work
        \cite{mayekarratq}, we proposed appropriate {\em{gain-shape}}
        quantizers for quantizing the oracle output in the Euclidean
        setup, which resulted in lesser bias over standard
        quantizers. This idea is valid for the general $\ell_p$ setup;
        in particular, we can use a gain quantizer to quantize the
        $\ell_q$ norm of the oracle output and a
        {{shape}} quantizer to quantize the oracle output vector normalized by the
        $\ell_q$ norm, the shape of the oracle output vector.
        Note that the shape vector has
        has $\ell_q$ norm $1$, which
        allows us to use the quantizers developed in this paper to
        quantize the shape. The gain is a scalar random variable which
        has its second moment bounded by $B^2$. To quantize such a
        random variable, we can use the quantizer proposed in
        \cite{mayekarratq} termed {\em{Adaptive Geometric Uniform
            Quantizer}} (AGUQ).
      %% HT: I think we should not trivialize the gap in the bounds as follows. Please check my alternative form.
        %% For the lower bounds in this case,
        %% note that the lower bounds developed for almost surely bounded
        %% oracles are still valid. In order to prove superiority of AGUQ
        %% gain quantizer over standard uniform gain quantizers we can
        %% use lower bounds valid for certain class of quantizers, which
        %% include uniform gain quantizers, from \cite{mayekarratq}. 
      \newest{ Clearly, the lower bounds
      for almost surely bounded oracles remain valid for mean square bounded oracles as well. Additionally, we can also derive lower bounds for a specific class
        of quantizers, such as those derived in~\cite{mayekarratq}, which help in capturing the reduction in the convergence rate due to mean square bounded noise. However, we do not have matching bounds, even for the Euclidean case.}

%\subsection{Mean Square Bounded Oracles}

%\input{conclusion}
%%%%%%%%%%%%%%%%%%%%%%%%%%%%%%%%%
\section*{Acknowledgement} The authors would like to thank Arya Mazumdar and Anand Theertha Suresh for useful discussions. 

Prathamesh Mayekar is supported by a PhD
  fellowship from Wipro Limited. Himanshu Tyagi is supported by a grant from Robert Bosch
  Center for Cyberphysical Systems (RBCCPS), Indian Institute of
  Science, Bangalore and the grant EMR/2016/002569 from the Department
  of Science and Technology (DST), India. 
%%%%%%%%%%%%%%%%%%%%%%%%%%%%%%%%%
\bibliography{IEEEabrv,tit2018}
\bibliographystyle{IEEEtranS}
%%%%%%%%%%%%%%%%%%%%%%%%%%%%%%%%%

\end{document}